\documentclass[11pt]{article}
\usepackage{amsmath,amsfonts,amsthm,graphicx,float,setspace,multirow}
\usepackage[usenames,dvipsnames]{color}
\usepackage[a4paper,margin=2.5cm]{geometry}

\newtheorem{theorem}{Theorem}
\newtheorem{lemma}{Lemma}
\newtheorem{remark}{Remark}

\newtheorem{prop}{Proposition}
\newtheorem{example}{Example}

\newenvironment{key}[1][]{ {\noindent \bf Keywords#1: } \rmfamily
}{\hspace*{\fill}  \\}

\begin{document}

\title{\LARGE Risk Aversion in the Small and in the Large: Beyond Arrow-Pratt \\ A Wiener Chaos Hierarchy of Dynamic Risk Premia}
\author{Christian Oliver Ewald\thanks{Adam Smith Business School, University of Glasgow, UK and Institute of Mathematics and Mathematical Statistics, Umeå University, Sweden. e-mail: christian.ewald@glasgow.ac.uk or christian.ewald@umu.se}}

\maketitle

\begin{abstract}
\noindent The Arrow--Pratt approximation is one of the cornerstones of expected utility theory, providing the classical local approximation of certainty equivalents and risk premia in terms of absolute risk aversion. Despite its widespread use, its mathematical scope and relationship to higher-order risk preferences remain only partially understood. This paper develops a new framework for the analysis of certainty equivalents and dynamic risk premia based on Malliavin calculus and Wiener chaos analysis. We first show that the classical Arrow--Pratt approximation is not asymptotically valid for arbitrary sequences of vanishing risks, thereby identifying precise limitations of the traditional theory. Motivated by this observation, we formulate certainty equivalents dynamically by considering the progressive revelation of uncertainty through a Brownian filtration. Combining Itô calculus, the Clark--Ocone representation and the Wiener chaos decomposition, we derive a complete hierarchy of higher-order dynamic risk premia and obtain explicit representations of the corresponding coefficients in terms of Malliavin derivatives. For mixed Wiener chaos expansions, higher-order preference measures, including prudence and temperance, emerge naturally through interactions between chaos components and are characterised using Bell polynomial representations. Explicit results for quadratic Gaussian functionals and the Vasicek interest-rate model illustrate the theory and identify a broad class of regular Wiener functionals for which the classical Arrow--Pratt approximation is recovered as the leading-order term. The results establish a unified framework linking expected utility theory, stochastic analysis and Wiener chaos expansions, opening a new perspective on higher-order certainty equivalents and the dynamic measurement of risk.
\end{abstract}

\begin{key}
Arrow–Pratt approximation; certainty equivalents; risk premium; Malliavin calculus; Wiener chaos expansions; Itô calculus; higher-order risk preferences; expected utility
\end{key}

\onehalfspacing

\section{Introduction}

The evaluation of risky prospects through certainty equivalents and risk
premia lies at the heart of decision theory, mathematical economics and
financial mathematics. Since the pioneering work of Arrow (1965) and
Pratt (1964), the local approximation

\[
\pi
\approx
\frac12
A(w)\sigma^2, \mbox{ with } A(w)
=
-\frac{u''(w)}{u'(w)},
\]

\noindent has become one of the most widely used tools for quantifying risk
attitudes. The Arrow--Pratt coefficient of absolute risk aversion, $A(w)$,
provides a remarkably simple characterization of local preferences under
uncertainty and has become a cornerstone of modern expected utility
theory.

The approximation has proved enormously influential. It underlies large
parts of the literature on portfolio choice, insurance, precautionary
saving, asset pricing, production under uncertainty and financial
economics. It also forms the basis for many textbook treatments of
expected utility, where it is commonly interpreted as the universal
first-order approximation to the certainty equivalent of a small risk.

During the last four decades, however, it has become increasingly clear
that many economic decisions cannot be adequately described by local
variance alone. Beginning with the seminal work of Kimball (1990), the
focus shifted towards higher-order risk preferences. Kimball introduced
the notion of prudence,
\[
P(w)
=
-
\frac{u'''(w)}{u''(w)},
\]
showing that the third derivative of utility governs precautionary
saving in the presence of future income risk. Prudence plays a role
analogous to that of Arrow--Pratt risk aversion, but for the demand for
self-insurance against future uncertainty rather than for the valuation
of a single risky prospect.

Subsequent work considerably broadened this perspective. Kimball (1992,
1993) introduced the concept of temperance,
\[
T(w)
=
-
\frac{u^{(4)}(w)}{u'''(w)},
\]
which measures the willingness to separate independent risks rather than
to concentrate them. Eeckhoudt and Schlesinger (2006), together with
Eeckhoudt, Gollier and Schlesinger (1996, 2005), developed a systematic
theory of higher-order risk attitudes by characterising prudence,
temperance and further higher-order preferences through simple lottery
comparisons. Their work demonstrated that successive derivatives of the
utility function admit natural economic interpretations extending the
Arrow--Pratt coefficient beyond local variance effects.

This higher-order preference literature has found numerous applications,
including precautionary saving, prevention, self-protection, portfolio
choice under background risk, production under uncertainty and optimal
insurance design. Nevertheless, the primary objective of this literature
has been to characterise comparative statics and behavioural properties
of preferences. Comparatively little attention has been paid to deriving
systematic higher-order approximations of certainty equivalents or risk
premia themselves. In particular, there exists no analogue of the
Arrow--Pratt approximation that naturally generates an infinite sequence
of higher-order dynamic risk-premium coefficients associated with these
successive preference measures.

A second strand of literature has focused on higher-order expansions of
expected utility and certainty equivalents. Building on the classical
mean--variance approximation, a number of authors have derived systematic
higher-order approximations based on cumulants, Edgeworth expansions and
Gram--Charlier series, thereby incorporating skewness, kurtosis and
higher moments into certainty-equivalent calculations. Such expansions
have found important applications in portfolio selection, asset pricing
and risk management, where non-Gaussian features of return distributions
play a significant role (see, for example, Athayde and Flores (2004),
Harvey and Siddique (2000), and Jondeau and Rockinger (2006)). These
approaches typically express the certainty equivalent as an expansion in
the cumulants of a fixed probability distribution, leading to correction
terms involving skewness, excess kurtosis and higher-order moments.
Consequently, higher-order preference measures such as prudence and
temperance naturally appear through successive derivatives of the utility
function.

While these contributions provide valuable insight into the effects of
distributional asymmetry and tail risk, they remain fundamentally static
in nature. The underlying source of uncertainty is fixed and the
expansion is performed with respect to the magnitude of higher moments or
departures from normality. By contrast, the present paper considers the
progressive revelation of uncertainty through a Brownian filtration and
studies the certainty equivalent as a function of the information horizon.
The resulting expansion is therefore not organised by cumulants of a
fixed distribution, but by the small-time evolution of the conditional
distribution. As we shall show, this leads naturally to a hierarchy
indexed by Wiener chaos order rather than by distributional moments, with
the coefficients admitting explicit representations in terms of
Malliavin derivatives and Wiener chaos contractions.

A third strand of literature has developed around stochastic analysis
and Malliavin calculus. Originating in the work of Malliavin (1978) and
subsequently developed by Bismut (1983), 
Ocone (1984) and
Nualart (2006), Malliavin calculus has become one of the principal tools
for analysing the differential structure of Wiener functionals. Its
central objects are the Malliavin derivative, which quantifies the
sensitivity of a functional with respect to infinitesimal perturbations
of the underlying Brownian path, and the Clark--Ocone representation,
which expresses square-integrable Wiener functionals as stochastic
integrals driven by their Malliavin derivatives. Together with the Wiener
chaos decomposition, these tools provide a complete orthogonal analysis
of Brownian functionals and have become fundamental in modern stochastic
analysis (Nualart, 2006; Di Nunno et al., 2009).

These methods have found numerous applications throughout mathematical
finance. Malliavin calculus has become a standard approach for computing
Greeks and sensitivity measures in derivative pricing, often providing
more efficient estimators than finite-difference methods (Fournié et al.,
1999, 2001). It also plays an important role in stochastic control,
utility maximisation and utility-indifference pricing (Karatzas and
Shreve, 1998; Henderson, 2002; Mania and Schweizer, 2005; Becherer,
2003; Monoyios, 2013), as well as in nonlinear filtering, backward
stochastic differential equations and numerical methods for high-
dimensional stochastic systems.

Closely related is the extensive literature exploiting Wiener chaos
expansions as an analytical and computational tool. Wiener chaos methods
have proved particularly successful in stochastic partial differential
equations, nonlinear filtering and uncertainty quantification, while in
mathematical finance they have been used for the approximation of
conditional expectations, the numerical solution of optimal stopping
problems and Bermudan option pricing (Bally, Pagès and Printems, 2005;
Bender and Denk, 2007; Bender and Steiner, 2012). Wiener chaos
expansions have also been employed in interest-rate theory, where they
provide explicit representations of square-integrable interest-rate
models and Wiener functionals associated with bond prices and the term
structure (Hughston and Rafailidis, 2005; Hughston and Mina, 2012).

Despite these developments, Malliavin calculus and Wiener chaos have
been employed almost exclusively as analytical or computational tools.
To the best of our knowledge, they have not been used to develop a
general theory of certainty equivalents and risk premia. In particular,
there appears to be no framework in which the classical Arrow--Pratt
approximation is systematically extended into a hierarchy of higher-order
risk premia indexed by Wiener chaos order, nor one in which successive
preference measures such as prudence and temperance emerge naturally
through Malliavin derivatives and Wiener chaos contractions. It is
precisely this connection that the present paper develops.

The present paper brings these three strands of literature together. We
develop a dynamic theory of higher-order Arrow--Pratt risk premia based
on Malliavin calculus and Wiener chaos expansions. Rather than viewing
risk as a static random variable whose magnitude is artificially scaled,
we study the gradual arrival of uncertainty through the Brownian
filtration. The certainty equivalent therefore becomes a function of the
information horizon, and its local behaviour admits a systematic
small-time expansion.

Our starting point is a careful re-examination of the classical
Arrow--Pratt approximation. We show that, contrary to widespread belief,
the approximation is not asymptotically correct for arbitrary sequences
of small risks. In particular, we construct sequences of risks that
converge to zero in $L^p$ for every $p$ while the relative error of the
Arrow--Pratt approximation remains bounded away from zero. This
demonstrates that small variance alone is insufficient to justify the
classical approximation and that additional structural assumptions are
required.

Motivated by this observation, we place the problem into a continuous
time Brownian framework. Using It\^o calculus we derive a dynamic
interpretation of the Arrow--Pratt approximation as an infinitesimal
risk premium generated by the local quadratic variation of the
conditional expectation process. This formulation naturally leads to a
representation in terms of the Clark--Ocone integrand and hence the
first Malliavin derivative of the payoff.

The main contribution of the paper is the construction of a complete
hierarchy of higher-order dynamic risk premia. Exploiting the Wiener
chaos decomposition, we prove that each pure Wiener chaos contributes
first at a deterministic order in the small-time expansion of the
certainty equivalent. The resulting hierarchy yields a sequence of
dynamic Arrow--Pratt coefficients indexed by the Wiener chaos order and
expressed explicitly through Malliavin derivatives.

The analysis is then extended to finite mixed Wiener chaos expansions.
In this setting interactions between different chaos orders generate
higher-order preference characteristics. In particular, prudence,
temperance and higher-order utility derivatives arise naturally through
contractions between Wiener chaos components, while Bell-polynomial
representations describe the nonlinear transformation from expected
utility to certainty equivalents. This provides, to our knowledge, the
first systematic representation of higher-order dynamic risk premia in
terms of Wiener chaos contractions and Malliavin derivatives.

The paper concludes with explicit examples based on quadratic Gaussian
payoffs and the Vasicek interest-rate model. These examples demonstrate
that, within the class of regular Wiener functionals, the classical
Arrow--Pratt approximation is recovered as the leading-order term while
the higher-order corrections are completely determined by the underlying
Wiener chaos structure. This stands in sharp contrast to the
counterexamples constructed earlier in the paper and allows us to
identify precisely the class of stochastic models for which the
Arrow--Pratt approximation is asymptotically valid.

The remainder of the paper is organised as follows. Section~2 reviews the classical Arrow--Pratt approximation and discusses its mathematical limitations. Section~3 introduces a dynamic formulation based on Itô calculus, conditional certainty equivalents and Malliavin calculus. Section~4 develops a hierarchy of dynamic risk premia for pure Wiener chaos expansions, showing that each chaos order contributes first at a deterministic order in the small-time expansion of the certainty equivalent and deriving explicit representations in terms of Malliavin derivatives. Section~5 extends the analysis to mixed Wiener chaos expansions, where interactions between different chaos orders give rise naturally to higher-order preference measures such as prudence and temperance, leading to a unified representation based on Bell polynomials. Finally, Section~6 establishes the asymptotic validity of the classical Arrow--Pratt approximation for a broad class of regular Wiener functionals and illustrates the theory through explicit examples.

\section{Classical Arrow--Pratt Approximation and its Limitations}

Before developing our dynamic framework, it is useful to revisit the
classical Arrow--Pratt approximation in some detail. Besides establishing
notation and recalling the standard derivation, this also allows us to
identify several mathematical issues that have received relatively little
attention in the literature. In particular, we shall show that the
Arrow--Pratt approximation is not asymptotically valid for arbitrary
sequences of vanishing risks, thereby motivating the need for a more
general framework developed in the subsequent sections.

The risk premium associated with a random payoff $w+\epsilon$, where
$w\in\mathbb{R}$ denotes the initial wealth level and
$\mathbb{E}(\epsilon)=0$, is defined as the amount of certain wealth that
an investor is willing to forgo in order to eliminate the risk. If the
investor's preferences are represented by a strictly increasing and concave utility
function $u(\cdot)$, then the risk premium $\pi$ is determined implicitly
through the certainty-equivalent equation
\begin{equation}
\mathbb{E}\bigl(u(w+\epsilon)\bigr)=u(w-\pi).\label{eq:1}
\end{equation}

At this stage it is useful to emphasise that the risk premium depends on
three separate ingredients: the initial wealth level $w$, the utility
function $u(\cdot)$ describing the investor's preferences, and the
distribution of the random disturbance $\epsilon$. To make this dependence
explicit, we may write
\[
\pi=\pi(w,u(\cdot),\epsilon).
\]
Except in a few special cases, the certainty-equivalent equation cannot be
solved explicitly, so that the corresponding risk premium is generally not
available in closed form. As a consequence, much of the theoretical
analysis of risk premia relies on asymptotic approximations. By far the
most influential of these is the local approximation developed
independently by Arrow (1965) and Pratt (1964), now universally known as
the Arrow--Pratt approximation. Owing to its simplicity and economic
interpretability, it has become one of the fundamental results of expected
utility theory and remains the standard tool for approximating certainty
equivalents under small risks.

Pratt (1964) derives this approximation by expanding both sides of the
certainty-equivalent equation in a Taylor series about the initial wealth
level and then solving the resulting expression for the risk premium. More
precisely, he considers the expansions
\begin{eqnarray}
u(w-\pi) & = &  u(x) - \pi u^\prime(w) + \mathcal{O}(\pi^2) \label{eq:2} \\
\mathbb{E}(u(w + \epsilon)) & = &  \mathbb{E}\left(u(x) + \epsilon u^\prime(w) + \frac{1}{2} \epsilon^2 u^{\prime \prime}(w) + \mathcal{O}(\epsilon^3)\right) \notag \\
& = & u(w) + \frac{1}{2} \sigma_\epsilon^2 u^{\prime \prime}(w) + \mathcal{O}(\sigma_\epsilon^2).\label{eq:3}
\end{eqnarray}
Ignoring the higher order $\mathcal{O}(\pi^2)$ terms, he then derives from (\ref{eq:2}) and (\ref{eq:3}) that
\begin{equation}
\pi(w,u(\cdot),\epsilon) = \frac{1}{2} \sigma_\epsilon^2 A(w), \label{eq:4}
\end{equation}
with $A(w) = -\frac{u^{\prime \prime}(w)}{u^\prime(w)}$ the Arrow-Pratt risk aversion coefficient.


Although the Arrow--Pratt approximation has become one of the most widely
used results in expected utility theory, the original derivation raises a
number of mathematical issues. The first concerns a key assumption made by
Pratt (1964) in passing from equation (\ref{eq:3}) to equation (\ref{eq:4}). Specifically,
he assumes that (quote)

\emph{``We assume the third absolute central moment of $\tilde{z}$ is of
smaller order than $\sigma_z^2$. (Ordinarily it is of order
$\sigma_z^3$.)''}

In our notation, $\tilde{z}$ corresponds to $\epsilon$. The remark in
parentheses is particularly noteworthy, as it acknowledges that the
assumption is not expected to hold in the situations most commonly
encountered in practice. Indeed, it already fails for the benchmark case
of a normally distributed random variable,
$\epsilon\sim\mathcal{N}(0,\sigma_\epsilon^2)$, for which
\[
\mathbb{E}(|\epsilon|^3)=\sqrt{\frac{2}{\pi}}\sigma_\epsilon^3,
\]
so that the third absolute central moment is of order
$\sigma_\epsilon^3$, rather than of smaller order than
$\sigma_\epsilon^2$.

A second issue concerns the structure of the derivation itself. The Taylor
expansion on the left-hand side is performed with respect to the random
variable $\epsilon$ and is retained up to second order, whereas the
expansion on the right-hand side is performed with respect to the unknown
risk premium $\pi$ and is truncated after the linear term. Since the two
expansions involve different variables and are carried out to different
orders, there is no direct asymptotic relationship between the remainder
terms $\mathcal{O}(\pi^2)$ and $\mathcal{O}(\epsilon^3)$ (or,
equivalently, $\mathcal{O}(\sigma_\epsilon^3)$). Consequently, the
derivation alone does not justify neglecting one remainder relative to the
other, and additional arguments are required to establish the validity of
the resulting approximation.

The following Proposition will show the limits of the approximation.

 



\begin{prop}\label{prop:arrow-pratt-counterexample}
Let \(u:I\rightarrow\mathbb{R}\) be a strictly increasing and strictly concave
utility function, where \(I\subseteq\mathbb{R}\) is an open interval. Suppose
that \(u\in C^{2}(I)\), let \(w\in I\), and assume that $u'(w)>0$
and $u''(w)<0$. Choose \(a>0\) such that \(w-a\in I\). Let \((p_n)_{n\geq 1}\) be a sequence satisfying
\[
0<p_n<1
\qquad\text{and}\qquad
\lim_{n\rightarrow\infty}p_n=0,
\]
and define
\begin{equation*}
b_n:=\frac{ap_n}{1-p_n}.
\end{equation*}
Consider the sequence of two-point risks
\begin{equation*}
\varepsilon_n=
\begin{cases}
-a, & \text{with probability } p_n,\\[1mm]
b_n, & \text{with probability } 1-p_n.
\end{cases}
\end{equation*}

Then the following statements hold:
\begin{enumerate}
\item $\mathbb{E}(\varepsilon_n)=0$ for every \(n\geq 1\).
\item For every \(k\geq 1\), $\lim_{n\rightarrow\infty}
\mathbb{E}\left(\lvert\varepsilon_n\rvert^k\right)=0$.

\item The variance satisfies
\begin{equation*}
\sigma_n^2
:=
\mathbb{E}\left(\varepsilon_n^2\right)
=
\frac{a^2p_n}{1-p_n}
=
a^2p_n+O(p_n^2).
\end{equation*}

\item Let \(\pi_n=\pi(w,u(\cdot),\varepsilon_n)\) denote the risk premium defined
implicitly by
\begin{equation*}
\mathbb{E}\left[u(w+\varepsilon_n)\right]
=
u(w-\pi_n).
\end{equation*}
Then
\begin{equation*}
\pi_n
=
\frac{
u(w)-u(w-a)-au'(w)
}{
u'(w)
}
p_n
+
O(p_n^2).
\end{equation*}

\item Consequently,
\begin{equation*}
\lim_{n\rightarrow\infty}
\frac{\pi_n}{\sigma_n^2}
=
\frac{
u(w)-u(w-a)-au'(w)
}{
a^2u'(w)
}.
\end{equation*}
\end{enumerate}

The Arrow--Pratt approximation is
\begin{equation*}
\pi_n^{\mathrm{AP}}
=
\frac{1}{2}A(w)\sigma_n^2
=
-\frac{1}{2}
\frac{u''(w)}{u'(w)}
\sigma_n^2.
\end{equation*}
Hence, whenever
\begin{equation}
u(w)-u(w-a)-au'(w)
\neq
-\frac{1}{2}a^2u''(w), \label{eq:12}
\end{equation}
one has
\begin{equation*}
\lim_{n\rightarrow\infty}
\frac{\pi_n}{\pi_n^{\mathrm{AP}}}
=
\frac{
2\left[u(w)-u(w-a)-au'(w)\right]
}{
-a^2u''(w)
}
\neq 1.
\end{equation*}

Thus, even though \(\varepsilon_n\rightarrow 0\) in the space \(L^k\) for every
\(k\geq 1\), the relative error of the Arrow--Pratt approximation need not
converge to zero.
\end{prop}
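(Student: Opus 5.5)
Items 1--3 are direct computations with the two-point law. For the mean, $-ap_n+b_n(1-p_n)=-ap_n+ap_n=0$. For the moments, $\mathbb{E}|\varepsilon_n|^k=a^kp_n+b_n^k(1-p_n)$; since $b_n=ap_n/(1-p_n)\to 0$ and $p_n\to 0$, both terms vanish. For the variance, $a^2p_n+b_n^2(1-p_n)=a^2p_n+a^2p_n^2/(1-p_n)=a^2p_n/(1-p_n)$. Expanding $1/(1-p_n)=1+p_n+O(p_n^2)$ then gives $a^2p_n+O(p_n^2)$. None of these steps uses the utility function.

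The substantive part is item 4. I would first expand the expected utility
\[
\mathbb{E}[u(w+\varepsilon_n)]=p_n\,u(w-a)+(1-p_n)\,u(w+b_n).
\]
Since $u\in C^2(I)$ and $b_n=ap_n+O(p_n^2)\to 0$, Taylor's theorem at $w$ gives $u(w+b_n)=u(w)+b_nu'(w)+O(b_n^2)=u(w)+ap_nu'(w)+O(p_n^2)$. Hence
\[
\mathbb{E}[u(w+\varepsilon_n)]=u(w)-p_n\bigl[u(w)-u(w-a)-au'(w)\bigr]+O(p_n^2).
\]
Note that $u(w-a)$ is evaluated at a fixed point, so it contributes exactly and is never Taylor expanded; this is the source of the discrepancy with Arrow--Pratt.

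Next I must invert $u(w-\pi_n)=\mathbb{E}[u(w+\varepsilon_n)]$. This is the main obstacle, since it requires an a priori bound $\pi_n=O(p_n)$ before expanding $u(w-\pi_n)$. I would obtain it as follows.
\begin{itemize}
\item Because $u$ is continuous and strictly increasing, $\pi_n=w-u^{-1}(\mathbb{E}[u(w+\varepsilon_n)])$ is well defined, as $\mathbb{E}[u(w+\varepsilon_n)]$ lies between $u(w-a)$ and $u(w+b_n)$.
\item Since $\mathbb{E}[u(w+\varepsilon_n)]\to u(w)$, continuity of $u^{-1}$ gives $\pi_n\to 0$.
\item With $\pi_n\to0$, the mean value theorem gives $u(w)-u(w-\pi_n)=u'(\xi_n)\pi_n$ for some $\xi_n\to w$, and $u'(\xi_n)\to u'(w)>0$.
\item Therefore $\pi_n=\bigl(u(w)-\mathbb{E}[u(w+\varepsilon_n)]\bigr)/u'(\xi_n)=O(p_n)$.
\end{itemize}
Refining to second order, $u(w-\pi_n)=u(w)-\pi_nu'(w)+O(\pi_n^2)=u(w)-\pi_nu'(w)+O(p_n^2)$. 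Matching this with the expansion of the expected utility yields the stated formula for $\pi_n$.

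Item 5 then follows by dividing the expansion of $\pi_n$ by $\sigma_n^2=a^2p_n(1+O(p_n))$ and letting $n\to\infty$. The Arrow--Pratt ratio follows the same way, since $\pi_n^{\mathrm{AP}}=\tfrac12A(w)a^2p_n(1+O(p_n))$ and $A(w)>0$ because $u'(w)>0$ and $u''(w)<0$. The limit of $\pi_n/\pi_n^{\mathrm{AP}}$ differs from $1$ exactly under (\ref{eq:12}).

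Finally, I would remark that (\ref{eq:12}) generically holds. By Taylor's theorem with Lagrange remainder, $u(w)-u(w-a)-au'(w)=-\tfrac12a^2u''(\eta)$ for some $\eta\in(w-a,w)$. This differs from $-\tfrac12a^2u''(w)$ unless $u''$ takes the value $u''(w)$ at that intermediate point, which fails, for instance, for exponential utility.
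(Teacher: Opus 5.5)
Your proposal is correct and follows the paper's proof essentially step for step. Items 1--3 are direct two-point computations. The expansion of $u(w+b_n)$ keeps $u(w-a)$ exact, then $\pi_n\to 0$ follows from continuity and monotonicity, and a first-order inversion finishes the argument. The only minor difference is in that inversion: you obtain the a priori bound $\pi_n=O(p_n)$ from the mean value theorem and then expand $u(w-\pi_n)$, whereas the paper Taylor-expands the $C^2$ local inverse $v=u^{-1}$ given by the inverse function theorem. Your closing Lagrange-remainder observation, that $u(w)-u(w-a)-au'(w)=-\tfrac12a^2u''(\eta)$, is a useful addition that the paper's proof does not contain.
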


\begin{proof}
By construction,
\begin{align*}
\mathbb{E}(\varepsilon_n)
&=
-ap_n+(1-p_n)b_n  =
-ap_n+(1-p_n)\frac{ap_n}{1-p_n}
=
0.
\end{align*}

For every \(k\geq 1\),
\begin{align*}
\mathbb{E}\left(\lvert\varepsilon_n\rvert^k\right)
&=
a^kp_n+(1-p_n)b_n^k  =
a^kp_n
+
(1-p_n)
\left(
\frac{ap_n}{1-p_n}
\right)^k  =
a^k
\left(
p_n+
\frac{p_n^k}{(1-p_n)^{k-1}}
\right).
\end{align*}
Since \(p_n\to0\), both terms on the right-hand side converge to zero. Hence, $\mathbb{E}\left(\lvert\varepsilon_n\rvert^k\right) \longrightarrow0$ for every \(k\geq1\). In particular,
\begin{align*}
\sigma_n^2
&=
\mathbb{E}(\varepsilon_n^2)  =
a^2p_n+(1-p_n)b_n^2  =
a^2p_n+\frac{a^2p_n^2}{1-p_n}
=
\frac{a^2p_n}{1-p_n}.
\end{align*}
Since
\[
\frac{1}{1-p_n}
=
1+p_n+O(p_n^2),
\]
it follows that
\begin{equation}
\sigma_n^2
=
a^2p_n+O(p_n^2).
\label{eq:variance-expansion}
\end{equation}

We next compute expected utility. Since \(b_n=O(p_n)\), a Taylor
expansion of \(u\) around \(w\) yields
\[
u(w+b_n)
=
u(w)+u'(w)b_n+O(p_n^2).
\]
Therefore,
\begin{align*}
\mathbb{E}\left[u(w+\varepsilon_n)\right]
&=
p_nu(w-a)
+
(1-p_n)u(w+b_n) \\
&=
p_nu(w-a)
+
(1-p_n)
\left(
u(w)+u'(w)b_n+O(p_n^2)
\right) \\
&=
u(w)
-
\left(
u(w)-u(w-a)-au'(w)
\right)p_n
+
O(p_n^2),
\end{align*}
where we have used $(1-p_n)b_n=ap_n$. By strict concavity, $u(w-a)<u(w)-au'(w)$, and hence
\[
u(w)-u(w-a)-au'(w)>0.
\]

By definition of the risk premium,
\begin{equation}
u(w-\pi_n)
=
u(w)
-
\left(
u(w)-u(w-a)-au'(w)
\right)p_n
+
O(p_n^2).
\label{eq:expected-utility-expansion}
\end{equation}

Since the right-hand side of
\eqref{eq:expected-utility-expansion} converges to \(u(w)\), strict
monotonicity and continuity of \(u\) imply that $\pi_n\longrightarrow0$. Moreover, since \(u'(w)>0\), the inverse function theorem implies that
\(u\) possesses a twice continuously differentiable local inverse
\(v=u^{-1}\) in a neighbourhood of \(u(w)\). Applying \(v\) to
\eqref{eq:expected-utility-expansion} gives
\begin{align*}
w-\pi_n
&=
v\Bigl(
u(w)
-
\left(
u(w)-u(w-a)-au'(w)
\right)p_n
+
O(p_n^2)
\Bigr).
\end{align*}

A Taylor expansion of \(v\) around \(u(w)\) yields
\begin{align*}
w-\pi_n
&=
v(u(w))
-
v'(u(w))
\left(
u(w)-u(w-a)-au'(w)
\right)p_n
+
O(p_n^2).
\end{align*}
Since $v(u(w))=w$ and $v'(u(w)) = \frac{1}{u'(w)}$, we obtain
\begin{equation}
\pi_n
=
\frac{
u(w)-u(w-a)-au'(w)
}{
u'(w)
}
p_n
+
O(p_n^2).
\label{eq:risk-premium-expansion}
\end{equation}

Combining
\eqref{eq:risk-premium-expansion}
with
\eqref{eq:variance-expansion}
gives
\begin{equation}
\lim_{n\to\infty}
\frac{\pi_n}{\sigma_n^2}
=
\frac{
u(w)-u(w-a)-au'(w)
}{
a^2u'(w)
}.
\label{eq:risk-premium-variance-limit}
\end{equation}

Finally, the Arrow--Pratt approximation is
\[
\pi_n^{\mathrm{AP}}
=
-\frac{1}{2}
\frac{u''(w)}{u'(w)}
\sigma_n^2.
\]
Hence, using
\eqref{eq:risk-premium-variance-limit} and (\ref{eq:12}),
\begin{align*}
\lim_{n\to\infty}
\frac{\pi_n}{\pi_n^{\mathrm{AP}}}
&=
\frac{\left(
\displaystyle
\frac{
u(w)-u(w-a)-au'(w)
}{
a^2u'(w)
}
\right)}{
\displaystyle
-\frac{1}{2}
\left(\frac{u''(w)}{u'(w)}\right)
}  =
\frac{
2\left(
u(w)-u(w-a)-au'(w)
\right)
}{
-a^2u''(w)
} \neq 1.
\end{align*}
\end{proof}

\begin{example}
Consider exponential utility, $u(x)=-\exp(-\gamma x)$, with $\gamma>0$. Then $u'(x)=\gamma\exp(-\gamma x)$ and
$u''(x)=-\gamma^2\exp(-\gamma x)$, so the Arrow--Pratt coefficient of absolute risk aversion is constant:
\[
A(w)
=
-\frac{u''(w)}{u'(w)}
=
\gamma,
\]

which is well known. Moreover,
\begin{align*}
u(w)-u(w-a)-au'(w)
&=
-\exp(-\gamma w)
+\exp\bigl(-\gamma(w-a)\bigr)
-a\gamma\exp(-\gamma w) \\
&=
\exp(-\gamma w)
\left(
\exp(\gamma a)-1-\gamma a
\right).
\end{align*}
Substituting this expression into
\eqref{eq:risk-premium-variance-limit} gives
\begin{equation}
\lim_{n\to\infty}
\frac{\pi_n}{\sigma_n^2}
=
\frac{
\exp(\gamma a)-1-\gamma a
}{
\gamma a^2
}.
\label{eq:exponential-premium-variance-limit}
\end{equation}

Since $\pi_n^{\mathrm{AP}} = \frac{\gamma}{2}\sigma_n^2$, it follows directly from
\eqref{eq:exponential-premium-variance-limit} that
\begin{equation*}
\lim_{n\to\infty}
\frac{\pi_n}{\pi_n^{\mathrm{AP}}}
=
\frac{
2\left(
\exp(\gamma a)-1-\gamma a
\right)
}{
\gamma^2a^2
}.
\label{eq:exponential-relative-error}
\end{equation*}

For every \(a>0\),
\[
\exp(\gamma a)
>
1+\gamma a+\frac{1}{2}\gamma^2a^2.
\]
Consequently,
\[
\frac{
2\left(
\exp(\gamma a)-1-\gamma a
\right)
}{
\gamma^2a^2
}
>1.
\]
Hence,
\[
\lim_{n\to\infty}
\frac{\pi_n}{\pi_n^{\mathrm{AP}}}
>1.
\]
In conclusion, the Arrow--Pratt approximation asymptotically underestimates the true
risk premium.

For completeness, the exact risk premium can also be obtained explicitly.
The defining equation
\[
\mathbb{E}\left[u(w+\varepsilon_n)\right]
=
u(w-\pi_n)
\]
becomes
\[
p_n\exp\bigl(-\gamma(w-a)\bigr)
+
(1-p_n)\exp\bigl(-\gamma(w+b_n)\bigr)
=
\exp\bigl(-\gamma(w-\pi_n)\bigr).
\]
After cancelling the common factor \(\exp(-\gamma w)\), we obtain
\begin{equation*}
\pi_n
=
\frac{1}{\gamma}
\log\left[
p_n\exp(\gamma a)
+
(1-p_n)\exp(-\gamma b_n)
\right].
\label{eq:exact-exponential-risk-premium}
\end{equation*}

Using $b_n=\frac{ap_n}{1-p_n}$ and expanding the logarithm around \(p_n=0\) yields
\[
\pi_n
=
\frac{
\exp(\gamma a)-1-\gamma a
}{
\gamma
}
p_n
+
O(p_n^2).
\]
By contrast,
\[
\pi_n^{\mathrm{AP}}
=
\frac{\gamma a^2}{2}p_n
+
O(p_n^2).
\]
Thus the discrepancy between the exact and Arrow--Pratt coefficients
persists at first order in \(p_n\).
\end{example}

Proposition~\ref{prop:arrow-pratt-counterexample} demonstrates that the classical derivation of the Arrow--Pratt approximation cannot be justified solely on the basis of the Taylor expansions employed by Pratt (1964). This, however, should not be interpreted as evidence against the approximation itself. Rather, it indicates that its asymptotic validity requires a more careful mathematical justification.

Indeed, as pointed out by Gollier (2001, pp.~47--49), the Arrow--Pratt approximation can be derived without relying on the asymmetric Taylor expansions used by Pratt (1964). Instead, Gollier considers a family of scaled risks of the form $k\epsilon$ and studies the resulting risk premium as a function of the scaling parameter $k$. The central idea is that a satisfactory approximation should accurately describe the behaviour of the risk premium for sufficiently small risks and, in particular, become asymptotically exact as $k\to0$.  Following Gollier (2001), we therefore define
\[
\pi(k)=\pi\bigl(w,u(\cdot),k\epsilon\bigr),
\]
where, by a slight abuse of notation, $\pi(k)$ denotes the risk premium associated with the scaled risk $k\epsilon$. The defining certainty-equivalent equation (1) then becomes
\begin{equation}
\mathbb{E}(u(w + k \epsilon)) = u(w- \pi(k)). \label{eq:21}
\end{equation}
Equation (\ref{eq:21}) holds for all $k \geq 0$ and we consider its derivative with respect to $k$, assuming implicitly that the function $\pi(k)$ is differentiable with respect to $k$ and that expectation and differentiation can be interchanged.\footnote{Mathematical conditions for this can be provided.} This gives
\begin{equation}
\mathbb{E}(u^\prime(w + k \epsilon) \epsilon) = - u^\prime(w- \pi(k)) \pi^\prime(k). \label{eq:22}
\end{equation}
Differentiating both side of equation (\ref{eq:22}) once more we obtain
\begin{equation}
\mathbb{E}(u^{\prime\prime}(w + k \epsilon) \epsilon^2) = - u^{\prime\prime}(w- \pi(k)) (\pi^\prime(k))^2 - u^\prime(w- \pi(k)) \pi^{\prime \prime}(k). \label{eq:23}
\end{equation}
Evaluating equation (\ref{eq:22}) at $k=0$ gives 
\begin{equation}
\pi^\prime(0)=0.\label{eq:24}
\end{equation}
Evaluating equation (\ref{eq:23}) at $k=0$ and substitution of (\ref{eq:24}) gives
\begin{equation}
\pi^{\prime \prime}(0) = - \frac{u^{\prime\prime}(w)}{u^\prime(w)} \sigma^2,\label{eq:25}
\end{equation}
with $\sigma^2 = \mathbb{E}(\epsilon^2)$. Equation (\ref{eq:25}) is exact and represents a statement about the infinitesimal level of risk aversion at $k=0$. Gollier (2001) now continues to Taylor expand the function $\pi(k)$ up to order 2, i.e.
\begin{equation*}
\pi(k) = \pi(0) + \pi^\prime(0) k + \frac{1}{2} \pi^{\prime \prime}(0) + \mathcal{O}^3
\end{equation*}
and by substitution of (\ref{eq:24}) and (\ref{eq:25}) obtains
\begin{equation*}
\pi(w,u(\cdot),k\epsilon) \approx \frac{1}{2} \sigma^2 A(w),
\end{equation*}
with $A(w)$ the Arrow-Pratt coefficient of risk aversion $A(w) = - \frac{u^{\prime\prime}(w)}{u^{\prime}(w)}$.


Higher-order approximations may, in principle, be obtained by continuing this procedure, that is, by repeatedly differentiating equation (\ref{eq:21}) with respect to the scaling parameter $k$ and expanding the function $\pi(k)$ to successively higher orders. This leads to increasingly involved expressions involving higher derivatives of the utility function and the moments of the underlying risk.

While mathematically sound, this approach remains tied to the artificial scaling of a fixed random variable. The parameter $k$ does not correspond to a natural source of uncertainty, but is introduced solely to facilitate an asymptotic analysis. In many applications, particularly in continuous-time finance, uncertainty instead unfolds progressively as information is revealed over time. This suggests replacing the amplitude scaling of the risk by a dynamic formulation in which the uncertainty evolves through a stochastic process and the certainty equivalent becomes a function of the information horizon.

The next section develops precisely such a formulation. By replacing the scaling parameter with time and exploiting Itô calculus, we obtain a dynamic interpretation of the Arrow--Pratt approximation as an infinitesimal risk premium. As will become apparent, this perspective not only reproduces the classical Arrow--Pratt coefficient, but also provides the natural starting point for the hierarchy of higher-order dynamic risk premia developed later using Malliavin calculus and Wiener chaos expansions.

\section{A Dynamic Arrow–Pratt Framework via Itô and Malliavin Calculus}

The classical Arrow--Pratt approximation and Gollier's refinement both analyse the risk premium by artificially scaling the magnitude of a fixed random payoff. While this provides a mathematically convenient asymptotic framework, the scaling parameter itself has no intrinsic economic interpretation. In many applications, particularly in continuous-time finance, uncertainty is not introduced by multiplying a fixed risk by an external parameter but rather unfolds progressively as new information becomes available over time. This naturally suggests replacing amplitude scaling by a dynamic formulation in which the certainty equivalent evolves with the underlying flow of information.

The purpose of this section is to develop precisely such a dynamic framework. We first show that the classical Arrow--Pratt approximation admits a natural interpretation as an infinitesimal risk premium generated by the local evolution of uncertainty in a Brownian setting. This interpretation is then extended from Brownian motion to general square-integrable Wiener functionals using the martingale representation theorem and the Clark--Ocone formula. The resulting representation identifies the Arrow--Pratt coefficient as the leading-order term of a dynamic certainty-equivalent expansion expressed in terms of Malliavin derivatives. This dynamic perspective will provide the foundation for the hierarchy of higher-order risk premia developed in the subsequent sections through Wiener chaos expansions.

\subsection{Arrow-Pratt via Itô Calculus}

In this section we develop a dynamic derivation of the Arrow--Pratt approximation based on It\^o calculus. Rather than analysing the effect of scaling the magnitude of a fixed random payoff, as in Gollier (2001), we model the gradual arrival of uncertainty through time. This perspective leads naturally to the notion of an infinitesimal risk premium, describing the local compensation required for an infinitesimal increase in uncertainty, and forms the basis for the higher-order dynamic risk premia developed later in the paper.

Let $(\Omega,\mathcal{F},(\mathcal{F}_t)_{t\ge0},\mathbb{P})$ be a filtered probability space satisfying the usual conditions, where $(\mathcal{F}_t)_{t\ge0}$ is the augmented natural filtration generated by a standard Brownian motion $(B_t)_{t\ge0}$. As a first step, we consider the special case in which the random disturbance in the certainty-equivalent equation (1) is modelled by
\[
\epsilon=\sigma B_t,
\]
where $\sigma>0$ is a constant volatility parameter. Clearly,
\[
\mathbb{E}[\sigma B_t]=0,
\qquad
\operatorname{Var}(\sigma B_t)=\sigma^2t.
\]
The time parameter $t$ therefore plays a role analogous to the scaling parameter $k$ in Gollier's (2001) analysis, but admits a direct probabilistic interpretation: uncertainty increases as time evolves rather than through the artificial multiplication of a fixed random variable.

The Brownian setting considered here is primarily intended to motivate the dynamic viewpoint and to illustrate the underlying mechanism in the simplest possible framework. Although the assumption of normally distributed uncertainty is restrictive, it allows the essential ideas to be developed transparently. In the next subsection we show that the argument extends naturally to arbitrary square-integrable Wiener functionals by combining the martingale representation theorem with the Clark--Ocone formula from Malliavin calculus.

Under the Brownian specification, the certainty-equivalent equation (1) becomes

\begin{equation}
\mathbb{E}(u(w+\sigma B_t)) = u(w-\Pi(t)).\label{eq:29}
\end{equation}

Note that dynamically, $\Pi(t)$ in equation (\ref{eq:29}) differs from $\pi(k)$ in equation (\ref{eq:21}), as the dependence of $B_t$ on $t$ is of order $\sqrt{t}$ rather than $t$, while $k \epsilon$ clearly depends on $k$ of order $k$. To recognize this and derive the correct formula, we introduce 
\begin{equation*}
\pi_s = \frac{d}{ds} \Pi(s)
\end{equation*}
and obtain
\begin{equation}
\mathbb{E}(u(w+\sigma B_t)) = u\left(w-\int_0^t \pi_s ds\right).\label{eq:31}
\end{equation}
Considering the limit on both sides of equation (\ref{eq:31}) for $t \to 0$ identifies $\pi_0$ as an infinitesimal risk premia. We show below that it matches the Arrow-Pratt approximation exactly. To see this, we apply the It\^o formula 
\begin{equation}
du(w+\sigma B_t) = u^\prime(w+\sigma B_t) \sigma dB_t + \frac{1}{2} u^{\prime \prime}(w+\sigma B_t) \sigma^2 dt.\label{eq:32}
\end{equation}
Hence the left hand side of equation (\ref{eq:31}) can be rewritten in terms of an integral and we obtain
\begin{equation*}
u(w) + \mathbb{E} \left(\int_0^t \frac{1}{2} \sigma^2 u^{\prime \prime}(w+\sigma B_s)ds \right)  = u\left(w-\int_0^t \pi_s ds\right),
\end{equation*}
and taking derivatives at $t=0$ on both sides yields
\begin{equation*}
\frac{1}{2} \sigma^2 u^{\prime \prime}(w) = - u^\prime(w) \pi_0.
\end{equation*}
In conclusion we obtain
\begin{equation}
\pi_0  = -  \frac{1}{2} \sigma^2 \frac{u^{\prime \prime}(w)}{u^\prime(w)},\label{eq:35}
\end{equation} 
which corresponds to (\ref{eq:25}).

The interpretation of $\pi_0$ here however is a slightly different one than in Gollier (2001) and Pratt (1964). We refer to it as infinitesimal risk aversion coefficient, as the uncertainty is introduced dynamically over a time variable $t$, reflecting that the further one looks into the future, the larger the uncertainty, and we simply look at an infinitesimal time interval. This interpretation will be essential in the next sections. Gollier (2001) and Pratt (1964) scale the uncertainty itself and in the next couple of lines, we show  that their approach can also be replicated using the It\^o formula. Assuming for now that the random term $\epsilon$ is normal distributed, then
\begin{equation*}
\epsilon \sim B_{\sigma^2}
\end{equation*}
and hence equation (\ref{eq:1}) can be written as
\begin{equation*}
\mathbb{E}(u(w + k B_{\sigma^2})) = u(w- \pi(k)).
\end{equation*}
For the computation of the expectation, only distributions matter and as
\begin{equation}
k \cdot \epsilon \sim k \cdot B_{\sigma^2} \sim   B_{k^2 \sigma^2}.\label{eq:38}
\end{equation}
we obtain
\begin{equation}
\mathbb{E}(u(w + B_{k^2 \sigma^2})) = u(w- \pi(k)).\label{eq:39}
\end{equation}
To evaluate (\ref{eq:39}) we can use once more the It\^o formula as in (\ref{eq:32}), but with the diffusion coefficient formally set to $1$ and instead appearing in the borders of the integral instead. We obtain
\begin{equation*}
u(w) + \mathbb{E} \left(\int_0^{k^2 \sigma^2} \frac{1}{2} u^{\prime \prime}(w+B_s)ds \right)   = u(w- \pi(k)).
\end{equation*}
Now proceeding exactly as in Gollier (2001) taking first and second order derivatives we obtain after the first differentiation and application of the chain rule and the "Haupsatz" of differential and integral calculus\footnote{Note that the upper integral border depends on $k^2$ which after differentiation will lead to a factor $2$ canceling with the factor $\frac{1}{2}$ in the expression.}
\begin{equation*}
\mathbb{E} \left(u^{\prime \prime}(w+B_{k^2 \sigma^2}) \cdot   k \sigma^2 \right) = - u^\prime(w- \pi(k)) \pi^\prime(k)
\end{equation*}
Evaluation at $k=0$ once again gives $\pi^\prime(0) =0$ and a second differentiation with evaluation at $k=0$ leads to
\begin{equation*}
u^{\prime \prime}(w) \sigma^2 = - u^{\prime}(w) \pi^{\prime \prime}(0) \Leftrightarrow \pi^{\prime \prime}(0) = - \frac{u^{\prime\prime}(w)}{u^\prime(w)} \sigma^2.
\end{equation*}
Thus, our Itô-calculus approach reproduces exactly the infinitesimal Arrow–Pratt coefficient obtained by Gollier (2001). The purpose of this alternative derivation is not to provide a simpler proof, but rather to demonstrate that the classical scaling argument can be embedded naturally within an Itô-calculus framework. In the Gaussian setting, scaling the magnitude of the risk is equivalent in distribution to extending the time horizon of a Brownian motion, as expressed by equation (\ref{eq:38}). The two approaches therefore represent different interpretations of the same underlying asymptotic argument, with the Itô formulation providing a framework that extends naturally beyond Brownian motion through martingale representation and Malliavin calculus.

\subsection{Extension to Square-Integrable Wiener Functionals via Malliavin Calculus}

In the previous subsection we showed that the Arrow–Pratt approximation can be derived either by scaling the magnitude of the underlying risk, as in Gollier (2001), or equivalently by considering the evolution of uncertainty through Brownian motion. The latter viewpoint naturally emphasizes the flow of information over time, represented by the underlying filtration, and suggests that the Brownian setting itself is not essential. Rather, the key object is the martingale generated by the progressive revelation of information. This observation allows the analysis to be extended from Brownian payoffs to arbitrary square-integrable Wiener functionals by combining the martingale representation theorem with Malliavin calculus. To develop this extension, we assume that the random payoff replacing $\epsilon$ in equation (\ref{eq:1}) is a square-integrable Wiener functional $X$ at time $T$, i.e. $X \in L^2_T(\Omega)$ is $\mathcal{F}_T$ measurable for some fixed $T>0$. This is aligned with the assumption that $\epsilon$ in (\ref{eq:1}) has finite variance. In this case
\begin{equation*}
t \mapsto \mathbb{E}_t X = \mathbb{E}(X|\mathcal{F}_t)
\end{equation*}
is a martingale on $[0,T]$ and by the classical martingale representation theorem can be represented as
\begin{equation}
\mathbb{E}_t X = X_0   + \int_0^t \sigma_s dB_s, \label{eq:44}
\end{equation}
for a unique square-integrable process $(\sigma_s)$ and all $t \in [0,T]$. Clearly for $0 \leq r < t$ we also have  
\begin{equation*}
\mathbb{E}_t X = \mathbb{E}_r X  + \int_r^t \sigma_s dB_s.
\end{equation*}
We mimic the derivation at the beginning of the previous section where $\mathbb{E}_r X$ will take over the role of $w$ and all calculations are relative to the reference time $r$. In this framework equation (\ref{eq:1}) becomes
\begin{equation*}
\mathbb{E}_r(u(\mathbb{E}_t X)) = u(\mathbb{E}_r X - \Pi_r(t))
\end{equation*}
where now $\Pi_r(t)$ is an $\mathcal{F}_r$ measurable random variable. As in the previous section, we define $\pi_{r,t} = \frac{d}{ds} \Pi_r(t)$  and we denote $\pi_r = \pi_{r,r}$. Then, repeating the argument from equations (\ref{eq:32})-(\ref{eq:35}), simply replacing the previously constant $\sigma$ with the now $\mathcal{F}_r$ measurable $\sigma_r$ leads to
\begin{equation}
\pi_r = -\frac{1}{2} \sigma_r^2 \frac{u^{\prime \prime}(\mathbb{E}_r X)}{u^\prime(\mathbb{E}_r X)}.\label{eq:47}
\end{equation}
Note that the expressions on both sides of (\ref{eq:47}) are $\mathcal{F}_r$ measurable and reflecting the information available at time $r$. 


The representation (\ref{eq:44}) is valid for every square-integrable Wiener functional by the
Martingale Representation Theorem. However, the integrand $(\sigma_t)_{0\le t\le T}$
is obtained only implicitly through that theorem. If, in addition,
\[
X\in\mathbb{D}_{1,2},
\]
the Sobolev space of Malliavin differentiable Wiener functionals (see
Nualart, 2006, Section~1.5), then the Clark--Ocone theorem provides the explicit
representation
\[
E_tX
=
E(X)
+
\int_0^t E_s(D_sX)\,dB_s,
\]
where $D_sX$ denotes the Malliavin derivative of $X$ at time $s$
(Clark, 1970; Ocone, 1984; Nualart, 2006, Section~1.3.4). Comparing this
representation with (\ref{eq:44}) immediately yields
\[
\sigma_t=E_t(D_tX).
\]
Substituting this expression into (47) therefore gives
\begin{equation}
\pi_r
=
-\frac12
E_r(D_rX)^2
\frac{u''(E_rX)}{u'(E_rX)},\label{eq:48}
\end{equation}
which expresses the infinitesimal dynamic risk premium directly in terms of the
Malliavin derivative of the terminal payoff.

Equation (\ref{eq:48}) shows that the infinitesimal Arrow--Pratt coefficient depends only on the
conditional Malliavin derivative of the terminal payoff. While this already provides an
explicit representation of the local dynamic risk premium, it does not yet reveal how
different sources of uncertainty contribute to the resulting certainty equivalent.

To obtain such a decomposition, we now exploit the Wiener chaos expansion of the payoff.
The Wiener chaos decomposition provides an orthogonal representation of every
square-integrable Wiener functional and allows both the martingale and its
Clark--Ocone integrand to be expressed explicitly in terms of the individual chaos
components. As will become apparent, this representation also forms the natural bridge
between Malliavin calculus and the hierarchy of higher-order dynamic risk premia
developed in the next section.

Throughout the remainder of this section we assume that
\[
X\in L^2(\Omega,\mathcal F_T,\mathbb P),
\]
so that, by the Wiener chaos decomposition,
\[
X=\sum_{n=0}^{\infty}I_n(f_n),
\]
where \(I_n(f_n)\) denotes the \(n\)-fold multiple Wiener integral of the symmetric
kernel \(f_n\in L_s^2([0,T]^n)\).
 
The Wiener chaos decomposition (see Nualart, 2006, Section~1.1.1) provides a natural refinement of the representation obtained in the previous subsection. While equation (\ref{eq:48}) expresses the infinitesimal
Arrow--Pratt coefficient in terms of the Malliavin derivative of the payoff, the chaos
decomposition makes it possible to identify explicitly how each orthogonal component
of the payoff contributes to this derivative.  

In this framework the conditional expectation process $M_t:=E_t(X)$ is given by
\begin{equation*}
M_t
=
\sum_{n=0}^{\infty}
I_n
\!\left(
f_n
\mathbf 1_{[0,t]^n}
\right),
\label{eq:conditional-chaos}
\end{equation*}
and forms a square-integrable martingale. Furthermore, by the Clark--Ocone theorem,
\begin{equation*}
M_t
=
M_0
+
\int_0^t
\sigma_s\,dB_s,
\label{eq:martingale-representation}
\end{equation*}
where
\begin{equation*}
\sigma_t
=
E_t(D_tX).
\label{eq:sigma-clark}
\end{equation*}

Using the chaos decomposition, the integrand itself admits the explicit
representation
\begin{equation*}
\sigma_t
=
\sum_{n=1}^{\infty}
n
I_{n-1}
\!\left(
f_n(\cdot,t)
\mathbf 1_{[0,t]^{n-1}}
\right).
\label{eq:sigma-chaos}
\end{equation*}

Consequently, the first-order dynamic Arrow--Pratt coefficient obtained in
the previous section can be written entirely in terms of the Wiener chaos
kernels,
\begin{equation}
\pi_t^{(1)}
=
-
\frac12
\frac{u''(M_t)}
     {u'(M_t)}
\left(
\sum_{n=1}^{\infty}
n
I_{n-1}
\!\left(
f_n(\cdot,t)
\mathbf 1_{[0,t]^{\,n-1}}
\right)
\right)^2.
\label{eq:first-chaos-premium}
\end{equation}

Equation
\eqref{eq:first-chaos-premium}
shows that the local Arrow--Pratt coefficient depends only on the first
Malliavin derivative of the payoff. The natural question is whether a
corresponding hierarchy of higher-order dynamic risk premia exists and, if
so, whether these coefficients can likewise be expressed through the Wiener
chaos expansion. This is the objective of the remainder of the next section.

\section{A Hierarchy of Dynamic Risk Premia for Pure Wiener Chaos}

The previous sections established that the classical Arrow--Pratt
approximation admits a natural dynamic interpretation through the
martingale $M_t=\mathbb{E}[X\mid\mathcal{F}_t]$, whose local volatility is given by the Clark--Ocone integrand
$\sigma_t=\mathbb{E}[D_tX\mid\mathcal{F}_t]$. The corresponding infinitesimal risk premium is determined by the local
quadratic variation of the martingale and therefore represents the
leading-order effect of uncertainty over an infinitesimal time interval.

The purpose of the present section is to develop a hierarchy of
higher-order dynamic risk premia by exploiting the Wiener chaos
decomposition of the terminal payoff.

Throughout this section we assume without loss of generality that the reference time is 
\(r=0\). There are two reasons for doing so. First, this setting
corresponds most closely to the original Arrow--Pratt framework, in
which an investor evaluates an uncertain future payoff before any
information has been revealed. Secondly, evaluating the expansion at the
initial time removes the conditioning on
\(\mathcal{F}_r\), so that the resulting coefficients become
deterministic and admit a particularly transparent representation in
terms of the Wiener chaos decomposition.

A remarkable structural property then emerges. As the information horizon $h$ tends to zero, the dynamic certainty equivalent admits a small-time expansion whose leading-order term depends on the Wiener chaos order. Specifically, for a payoff belonging to a pure $n$-th Wiener chaos, the associated dynamic certainty equivalent first differs from its expected value at order $h^n$. Thus, the chaos order determines the time scale on which uncertainty first affects the certainty equivalent, naturally leading to a hierarchy of higher-order dynamic Arrow--Pratt coefficients. In other words, the order of the Wiener chaos determines the order at which risk first appears in the small-time expansion of the dynamic certainty
equivalent. This naturally leads to a hierarchy of higher-order dynamic
Arrow--Pratt coefficients.

Throughout this section we write
\[
\mu=\mathbb{E}[X],
\]
and define the dynamic certainty equivalent by
\[
\mathbb{E}\!\left[
u\!\left(\mathbb{E}[X\mid\mathcal{F}_h]\right)
\right]
=
u(\mu-\Pi_0(h)),
\]
where \(\Pi_0(h)\) denotes the dynamic risk premium over the time
interval \([0,h]\).

\subsection{Pure second-order Wiener chaos}

We begin with the simplest non-trivial case.\footnote{The first Wiener chaos requires no separate treatment. If
$X=I_1(f)$ is a first-order Wiener integral, the Clark--Ocone
representation coincides with the Wiener integral itself, and the
resulting dynamic risk premium is precisely the infinitesimal
Arrow--Pratt coefficient derived in the previous sections. The second Wiener chaos
therefore represents the first genuinely new case in the hierarchy.} Assume that
\begin{equation}
X=\mu+I_2(f),
\label{eq:pure-second-chaos}
\end{equation}
where \(I_2(f)\) denotes the second-order multiple Wiener integral of a
symmetric kernel $f\in L_s^2([0,T]^2)$, which is assumed to be continuous in a neighbourhood of
\((0,0)\). Define
\[
Y_h
=
\mathbb{E}[X\mid\mathcal{F}_h]-\mu.
\]
Since
\[
\mathbb{E}[I_2(f)\mid\mathcal{F}_h]
=
I_2(f\,\mathbf{1}_{[0,h]^2}),
\]
we obtain
\[
Y_h
=
I_2(f\,\mathbf{1}_{[0,h]^2}).
\]

The following lemma describes the asymptotic behaviour of the moments of
\(Y_h\).

\begin{lemma}
As the information horizon $h$ tends to zero,
\[
\mathbb{E}[Y_h^2]
=
2f(0,0)^2h^2
+
o(h^2),
\]
and $\mathbb{E}[Y_h^4]=O(h^4)$.

\end{lemma}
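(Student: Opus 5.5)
The plan is to use the Itô isometry for multiple Wiener integrals for the second moment, and hypercontractivity (equivalence of $L^p$ norms) on a fixed Wiener chaos for the fourth moment. No explicit fourth-order contraction computation is needed.

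\emph{Second moment.} By the isometry $\mathbb{E}[I_n(g)^2]=n!\,\|g\|^2_{L^2([0,T]^n)}$ for symmetric $g$, and since $f\mathbf 1_{[0,h]^2}$ is symmetric, we get
\[
\mathbb{E}[Y_h^2]=2\int_0^h\!\!\int_0^h f(s,t)^2\,ds\,dt .
\]
By hypothesis $f$ is continuous on a neighbourhood of $(0,0)$, so $f^2$ is continuous there. We can therefore write $f(s,t)^2=f(0,0)^2+\rho(s,t)$ with
\[
\sup_{[0,h]^2}|\rho|\to 0 \quad \text{as } h\to0 .
\]
Integrating over the square of area $h^2$ gives
\[
\mathbb{E}[Y_h^2]=2f(0,0)^2h^2+2\int_{[0,h]^2}\rho=2f(0,0)^2h^2+o(h^2).
\]

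\emph{Fourth moment.} Here I would rely on hypercontractivity of the Ornstein--Uhlenbeck semigroup (Nualart, 2006, Section~1.4). On the $n$-th Wiener chaos, $\|F\|_{L^p}\le (p-1)^{n/2}\|F\|_{L^2}$ for $p\ge2$. Since $Y_h$ lies in the second chaos, taking $n=2$, $p=4$ gives
\[
\mathbb{E}[Y_h^4]\le 3^4\,\bigl(\mathbb{E}[Y_h^2]\bigr)^2 .
\]
The second-moment estimate gives $\mathbb{E}[Y_h^2]=O(h^2)$, so $\mathbb{E}[Y_h^4]=O(h^4)$.

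A more explicit alternative is the product formula $I_2(g)^2=\sum_{r=0}^2 r!\binom{2}{r}^2 I_{4-2r}(g\tilde\otimes_r g)$ with $g=f\mathbf 1_{[0,h]^2}$. Taking the expectation of the square and using the isometry expresses $\mathbb{E}[Y_h^4]$ through $\|g\|^4$, $\|g\otimes_1 g\|^2$ and $\|g\tilde\otimes g\|^2$. Each of these is bounded by $\|g\|^4$ via Cauchy--Schwarz, and $\|g\|^4=O(h^4)$.

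\emph{Main obstacle.} Both steps need $\|g\|^2=O(h^2)$, that is, boundedness of $f$ near the origin. This is supplied by the local continuity hypothesis, so the only care required is to take $h$ smaller than the radius of the neighbourhood on which $f$ is continuous, and hence bounded on a compact square. Beyond that the argument is routine. The one point I would double-check is that the hypercontractivity constant is applied to the chaos order $2$, not to a general $L^2$ element.
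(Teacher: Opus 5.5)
Your proposal is correct and follows the paper's proof: the multiple-integral isometry plus continuity of $f$ at the origin gives the second moment, and hypercontractivity on the second chaos gives the fourth moment; your explicit constant $3^4$, from $(p-1)^{n/2}$ with $n=2$ and $p=4$, is right. The product-formula alternative you sketch is also valid but not needed.
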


\begin{proof}
The Wiener chaos isometry (see, e.g., Nualart (2006, Chapter~1) yields
\[
\mathbb{E}[Y_h^2]
=
2!
\,
\|f\,\mathbf{1}_{[0,h]^2}\|_{L^2([0,T]^2)}^2.
\]
Hence
\[
\mathbb{E}[Y_h^2]
=
2
\int_{[0,h]^2}
f(s,t)^2
\,ds\,dt.
\]
Since \(f\) is continuous at the origin,
\[
\int_{[0,h]^2}
f(s,t)^2
\,ds\,dt
=
f(0,0)^2h^2
+
o(h^2),
\]
proving the first assertion.

The second estimate follows from Nelson's hypercontractivity inequality (see, e.g., Nualart (2006, Section~1.4.3),
which implies that for every fixed Wiener chaos there exists a constant
\(C>0\) such that
\[
\mathbb{E}[Y_h^4]
\le
C
\left(
\mathbb{E}[Y_h^2]
\right)^2.
\]
Since
\[
\mathbb{E}[Y_h^2]
=
O(h^2),
\]
the result follows immediately.
\end{proof}

We can now determine the leading-order dynamic risk premium.

\begin{prop}
Suppose that \(X\) is given by
\eqref{eq:pure-second-chaos}. Then
\[
\Pi_0(h)
=
-\frac14
\frac{u''(\mu)}
     {u'(\mu)}
\left(D_0^2X\right)^2
h^2
+
o(h^2).
\]
\end{prop}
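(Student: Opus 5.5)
Our plan is to Taylor expand both sides of the defining equation $\mathbb{E}[u(\mu+Y_h)]=u(\mu-\Pi_0(h))$ and use the preceding Lemma to control the remainders. We first need to identify $D_0^2X$. For $X=\mu+I_2(f)$ one has $D_sD_tX=2f(s,t)$. We interpret $D_0^2X$ as $2f(0,0)$, which is well defined by continuity of $f$ near the origin. Then $(D_0^2X)^2=4f(0,0)^2$, so the claim is equivalent to
\[
\Pi_0(h)=-\frac12\frac{u''(\mu)}{u'(\mu)}\,2f(0,0)^2h^2+o(h^2)=-\frac12\frac{u''(\mu)}{u'(\mu)}\,\mathbb{E}[Y_h^2]+o(h^2).
\]
In this form it is the Arrow--Pratt formula applied to the small risk $Y_h$.

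\textbf{Left-hand side.} We assume, as throughout, that $u\in C^3$ with $u'''$ bounded, or at least of controlled growth. Since $\mathbb{E}[Y_h]=0$,
\[
\mathbb{E}[u(\mu+Y_h)]=u(\mu)+\tfrac12u''(\mu)\mathbb{E}[Y_h^2]+R_h,\qquad |R_h|\le C\,\mathbb{E}|Y_h|^3.
\]
By Cauchy--Schwarz and the Lemma,
\[
\mathbb{E}|Y_h|^3\le(\mathbb{E}Y_h^2)^{1/2}(\mathbb{E}Y_h^4)^{1/2}=O(h)\,O(h^2)=O(h^3)=o(h^2).
\]
If we only had a $C^2$ utility, we would instead write the remainder as $\tfrac12\mathbb{E}[(u''(\mu+\theta Y_h)-u''(\mu))Y_h^2]$. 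We would then split on $\{|Y_h|\le\delta\}$, where continuity of $u''$ applies. On the complement we would use Chebyshev together with the fourth-moment bound, provided $u''$ grows at most polynomially; in that case hypercontractivity bounds all moments of $Y_h$. Either way the left-hand side equals $u(\mu)+u''(\mu)f(0,0)^2h^2+o(h^2)$.

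\textbf{Right-hand side.} First, $\Pi_0(h)\to0$. Indeed, the left-hand side tends to $u(\mu)$, and $u$ is continuous and strictly monotone, exactly as in the proof of Proposition~\ref{prop:arrow-pratt-counterexample}. Applying the local inverse $v=u^{-1}$ near $u(\mu)$, with $v'(u(\mu))=1/u'(\mu)$, gives
\[
\mu-\Pi_0(h)=v\bigl(u(\mu)+u''(\mu)f(0,0)^2h^2+o(h^2)\bigr)=\mu+\frac{u''(\mu)}{u'(\mu)}f(0,0)^2h^2+o(h^2).
\]
Since $v$ is $C^1$, a first-order expansion of $v$ suffices. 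Rearranging and substituting $f(0,0)^2=\tfrac14(D_0^2X)^2$ yields the stated formula.

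\textbf{Main obstacle.} The delicate point is justifying that the Taylor remainder on the left-hand side is $o(h^2)$ rather than merely $O(h^2)$. This is exactly where the Lemma's fourth-moment bound, obtained from hypercontractivity, is essential. It prevents the kind of failure exhibited in Proposition~\ref{prop:arrow-pratt-counterexample}, where the risk vanishes in $L^k$ but its rare large values dominate. The argument also needs some growth control on $u''$ (or $u'''$) so that the moment bounds transfer to the remainder, and the identification of $D_0^2X$ with $2f(0,0)$ requires a pointwise convention that the continuity assumption on $f$ supplies.
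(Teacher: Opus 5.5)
Your proposal is correct and follows essentially the same route as the paper. Both arguments use a second-order Taylor expansion of expected utility, control the cubic remainder with the hypercontractive moment bounds, solve for $\Pi_0(h)$, and substitute $D_0^2X=2f(0,0)$; your Cauchy--Schwarz step with the Lemma's fourth-moment estimate is equivalent to the paper's direct use of hypercontractivity at $p=3$, and your use of the local inverse $v=u^{-1}$ is slightly more careful than the paper, which simply asserts $\Pi_0(h)=O(h^2)$ before expanding $u(\mu-\Pi_0(h))$.
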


\begin{proof}
Taylor expansion gives
\[
u(\mu+y)
=
u(\mu)
+
u'(\mu)y
+
\frac12u''(\mu)y^2
+
O(y^3).
\]
Taking expectations and using $\mathbb{E}[Y_h]=0$, we obtain
\[
\mathbb{E}[u(\mu+Y_h)]
=
u(\mu)
+
\frac12u''(\mu)\mathbb{E}[Y_h^2]
+
O(\mathbb{E}|Y_h|^3).
\]

By Nelson's hypercontractivity inequality with $p=3$,
\[
\mathbb{E}|Y_h|^3
=
O\!\left(
(\mathbb{E}[Y_h^2])^{3/2}
\right)
=
O(h^3)
=
o(h^2).
\]
Hence
\[
\mathbb{E}[u(\mu+Y_h)]
=
u(\mu)
+
u''(\mu)f(0,0)^2h^2
+
o(h^2).
\]

On the other hand, a second-order Taylor expansion gives
\[
u(\mu-\Pi_0(h))
=
u(\mu)
-
u'(\mu)\Pi_0(h)
+
\frac12u''(\mu)\Pi_0(h)^2
+
o(\Pi_0(h)^2).
\]
Since $\Pi_0(h)=O(h^2)$, we have
\[
\Pi_0(h)^2=O(h^4)=o(h^2),
\]
and therefore
\[
u(\mu-\Pi_0(h))
=
u(\mu)
-
u'(\mu)\Pi_0(h)
+
o(h^2).
\] 

Equating both expansions yields
\[
\Pi_0(h)
=
-
\frac{u''(\mu)}
     {u'(\mu)}
f(0,0)^2h^2
+
o(h^2).
\]

Finally,
\[
D_0^2X
=
2f(0,0),
\]
so that
\[
f(0,0)^2
=
\frac14(D_0^2X)^2,
\]
which proves the result.
\end{proof}

\subsection{Pure third-order Wiener chaos}

We now consider the case where the terminal payoff belongs to the third
Wiener chaos. The principal difference from the previous subsection is
that the dynamic risk premium first appears at order \(h^3\), reflecting
the fact that the variance of a third-order Wiener integral over the
interval \([0,h]\) is of order \(h^3\).

Assume that
\begin{equation}
X=\mu+I_3(f),
\label{eq:pure-third-chaos}
\end{equation}
where \(f\in L_s^2([0,T]^3)\) is symmetric and continuous in a
neighbourhood of \((0,0,0)\).

Define
\[
Y_h
=
\mathbb{E}[X\mid\mathcal{F}_h]-\mu.
\]
Then
\[
Y_h
=
I_3(f\,\mathbf{1}_{[0,h]^3}).
\]

The moments of \(Y_h\) satisfy the following asymptotic estimates.

\begin{lemma}
As the information horizon $h$ tends to zero,
\[
\mathbb{E}[Y_h^2]
=
6f(0,0,0)^2h^3
+
o(h^3),
\]
and $\mathbb{E}[Y_h^4]=O(h^6)$.
\end{lemma}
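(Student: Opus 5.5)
The plan follows the proof of the corresponding lemma for the second chaos, replacing $2!$ by $3!$ and the two-dimensional square by the cube $[0,h]^3$.

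\textbf{Second moment.} Since $Y_h=I_3(f\mathbf{1}_{[0,h]^3})$ and the truncated kernel is still symmetric, the Wiener chaos isometry (Nualart, 2006, Chapter~1) gives
\[
\mathbb{E}[Y_h^2]=3!\,\|f\mathbf{1}_{[0,h]^3}\|^2_{L^2([0,T]^3)}=6\int_{[0,h]^3}f(s,t,v)^2\,ds\,dt\,dv .
\]
By assumption $f$ is continuous in a neighbourhood of $(0,0,0)$. Hence, for every $\eta>0$ there is $h_0$ such that $|f^2-f(0,0,0)^2|<\eta$ on $[0,h]^3$ for all $h<h_0$. The integral therefore differs from $f(0,0,0)^2h^3$ by at most $\eta h^3$. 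This gives $\int_{[0,h]^3}f^2=f(0,0,0)^2h^3+o(h^3)$, and multiplying by $6$ yields the first assertion.

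\textbf{Fourth moment.} We again invoke Nelson's hypercontractivity inequality, now for the fixed third chaos. For a random variable $F$ in the $n$-th chaos one has $\|F\|_{L^p}\le (p-1)^{n/2}\|F\|_{L^2}$. With $p=4$ and $n=3$ this gives
\[
\mathbb{E}[Y_h^4]\le 3^{6}\,\bigl(\mathbb{E}[Y_h^2]\bigr)^2 .
\]
The constant is independent of $h$, because only the chaos order enters, not the kernel. Since the first part shows $\mathbb{E}[Y_h^2]=O(h^3)$, we conclude $\mathbb{E}[Y_h^4]=O(h^6)$.

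\textbf{Main point of care.} No step here is a real obstacle. The one thing to get right is that the hypercontractivity constant is uniform in $h$. This holds because every $Y_h$ lies in the third Wiener chaos, and the bound depends only on that chaos order, not on the truncated kernel.
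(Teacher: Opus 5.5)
Your proposal is correct and follows the paper's proof essentially step for step: the chaos isometry plus continuity of $f$ at the origin give the second moment, and Nelson's hypercontractivity, with a constant depending only on the chaos order and hence uniform in $h$, gives the fourth moment. Your explicit constant $3^{6}$, obtained from $\|F\|_{L^4}\le 3^{3/2}\|F\|_{L^2}$ on the third chaos, is correct and makes the uniformity in $h$ more transparent than the paper's unspecified $C$.
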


\begin{proof}
Using the Wiener chaos isometry as before,
\[
\mathbb{E}[Y_h^2]
=
3!
\,
\|f\,\mathbf{1}_{[0,h]^3}\|_{L^2([0,T]^3)}^2,
\]
we obtain
\[
\mathbb{E}[Y_h^2]
=
6
\int_{[0,h]^3}
f(s_1,s_2,s_3)^2
\,ds_1\,ds_2\,ds_3.
\]
Since \(f\) is continuous at the origin,
\[
\int_{[0,h]^3}
f(s_1,s_2,s_3)^2
\,ds_1\,ds_2\,ds_3
=
f(0,0,0)^2h^3
+
o(h^3),
\]
which proves the first assertion. The estimate for the fourth moment again follows from Nelson's
hypercontractivity inequality,
\[
\mathbb{E}[Y_h^4]
\le
C
\left(
\mathbb{E}[Y_h^2]
\right)^2,
\]
for some constant \(C>0\). Since $\mathbb{E}[Y_h^2] = O(h^3)$, we conclude that $\mathbb{E}[Y_h^4]=O(h^6)$.
\end{proof}

The leading-order dynamic risk premium is therefore given by the
following result.

\begin{prop}
Suppose that \(X\) is given by
\eqref{eq:pure-third-chaos}. Then
\[
\Pi_0(h)
=
-\frac1{12}
\frac{u''(\mu)}
     {u'(\mu)}
\left(D_0^3X\right)^2
h^3
+
o(h^3).
\]
\end{prop}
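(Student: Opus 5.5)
The argument follows the proof of the second-chaos proposition step by step, with the time scale $h^2$ replaced by $h^3$. The starting point is the lemma above: $\mathbb{E}[Y_h]=0$, $\mathbb{E}[Y_h^2]=6f(0,0,0)^2h^3+o(h^3)$, and $\mathbb{E}[Y_h^4]=O(h^6)$.

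\emph{Expansion of expected utility.} Expand $u(\mu+y)=u(\mu)+u'(\mu)y+\frac12u''(\mu)y^2+R(y)$ with $|R(y)|\le C|y|^3$, and take expectations. The linear term vanishes, so
\[
\mathbb{E}[u(\mu+Y_h)]=u(\mu)+3u''(\mu)f(0,0,0)^2h^3+o(h^3)+O(\mathbb{E}|Y_h|^3).
\]
By Nelson's hypercontractivity on the fixed third chaos (or Lyapunov's inequality applied to the fourth-moment bound), $\mathbb{E}|Y_h|^3=O((\mathbb{E}Y_h^2)^{3/2})=O(h^{9/2})=o(h^3)$.

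\emph{Expansion of the certainty equivalent.} First, $\Pi_0(h)\to0$ by continuity and strict monotonicity of $u$. Next, $\Pi_0(h)=O(h^3)$, by applying the local inverse $v=u^{-1}$ exactly as in the proof of Proposition~\ref{prop:arrow-pratt-counterexample}. Hence $u(\mu-\Pi_0(h))=u(\mu)-u'(\mu)\Pi_0(h)+O(h^6)$. Equating the two expansions gives
\[
\Pi_0(h)=-3\,\frac{u''(\mu)}{u'(\mu)}f(0,0,0)^2h^3+o(h^3).
\]

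\emph{Identification of the coefficient.} It remains to express $f(0,0,0)$ through the Malliavin derivative. For $X=\mu+I_3(f)$, the third derivative is deterministic: $D^3_{s_1,s_2,s_3}X=3!\,f(s_1,s_2,s_3)$. Interpreting $D_0^3X$ as its value at the origin, which is meaningful by the assumed continuity, gives $D_0^3X=6f(0,0,0)$. Hence $f(0,0,0)^2=\frac1{36}(D_0^3X)^2$, and $3/36=1/12$ yields the stated constant.

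\emph{Main obstacle.} The step needing care is the Taylor remainder. If $u$ is only $C^2$, or not globally $C^3$, the bound $|R(y)|\le C|y|^3$ holds only locally. Since $Y_h$ is unbounded, I would split on the event $\{|Y_h|\le\delta\}$. On that event, use the local cubic bound. On its complement, use a growth assumption on $u$ (for example polynomial growth of $u''$) together with Chebyshev's inequality, $\mathbb{P}(|Y_h|>\delta)\le \mathbb{E}[Y_h^4]/\delta^4=O(h^6)$, and Cauchy--Schwarz. This shows the contribution of the complement is $o(h^3)$. The paper's earlier proof glosses over this point in the same way, so I would adopt the same implicit regularity of $u$.
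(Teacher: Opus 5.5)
Your proof is correct and is essentially the paper's argument: a second-order Taylor expansion with the cubic remainder controlled by hypercontractivity ($\mathbb{E}|Y_h|^3=O(h^{9/2})=o(h^3)$), linearisation of $u(\mu-\Pi_0(h))$, and the identification $D_0^3X=3!\,f(0,0,0)$, which gives $3/36=1/12$. Your additions (deriving $\Pi_0(h)=O(h^3)$ from the local inverse rather than asserting it, and truncating on $\{|Y_h|\le\delta\}$ to handle unbounded $Y_h$) make rigorous what the paper leaves implicit; note that if $u$ is merely $C^2$ near $\mu$, the local bound on that event is $\frac12\omega(\delta)Y_h^2$ rather than $C|Y_h|^3$, with $\omega$ the modulus of continuity of $u''$, and this still yields $o(h^3)$ after letting $\delta\downarrow0$.
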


\begin{proof}
The proof is entirely analogous to that of the previous proposition.
Taylor expansion yields
\[
u(\mu+y)
=
u(\mu)
+
u'(\mu)y
+
\frac12u''(\mu)y^2
+
O(y^3).
\]
Since $\mathbb{E}[Y_h]=0$, we obtain
\[
\mathbb{E}[u(\mu+Y_h)]
=
u(\mu)
+
\frac12u''(\mu)\mathbb{E}[Y_h^2]
+
O(\mathbb{E}|Y_h|^3).
\]

By hypercontractivity,
\[
\mathbb{E}|Y_h|^3
=
O\!\left(
(\mathbb{E}[Y_h^2])^{3/2}
\right)
=
O(h^{9/2})
=
o(h^3),
\]
so that
\[
\mathbb{E}[u(\mu+Y_h)]
=
u(\mu)
+
3u''(\mu)f(0,0,0)^2h^3
+
o(h^3).
\]

Furthermore,
\[
u(\mu-\Pi_0(h))
=
u(\mu)
-
u'(\mu)\Pi_0(h)
+
o(h^3),
\]
since
\[
\Pi_0(h)=O(h^3).
\]

Equating the two expansions gives
\[
\Pi_0(h)
=
-
3
\frac{u''(\mu)}
     {u'(\mu)}
f(0,0,0)^2
h^3
+
o(h^3).
\]

Finally,
\[
D_0^3X
=
3!f(0,0,0)
=
6f(0,0,0),
\]
so that
\[
f(0,0,0)^2
=
\frac1{36}
(D_0^3X)^2.
\]
Substituting this identity yields
\[
\Pi_0(h)
=
-\frac1{12}
\frac{u''(\mu)}
     {u'(\mu)}
(D_0^3X)^2
h^3
+
o(h^3),
\]
as claimed.
\end{proof}

\begin{remark}
It is noteworthy that the leading-order coefficient again depends only
on the classical Arrow--Pratt measure of absolute risk aversion, $-\frac{u''(\mu)}{u'(\mu)}$,
despite the fact that the payoff belongs to the third Wiener chaos.
The reason is that the leading contribution is determined entirely by
the variance of the conditional payoff increment. Higher derivatives of
the utility function, corresponding to prudence, temperance and higher
risk preferences, do not appear until different Wiener chaos orders are
allowed to interact. This phenomenon will be studied in
Section~5.
\end{remark}

\subsection{General pure Wiener chaos: A Hierarchy Theorem}

The previous two subsections reveal a clear structural pattern. The
second Wiener chaos contributes first at order \(h^2\), while the third
Wiener chaos contributes first at order \(h^3\). We now show that this
behaviour holds for an arbitrary Wiener chaos order.

\begin{theorem}[Hierarchy theorem]
\label{thm:hierarchy}
Suppose that
\begin{equation*}
X=\mu+I_n(f),
\label{eq:pure-n-chaos}
\end{equation*}
where $f\in L_s^2([0,T]^n)$ is symmetric and continuous in a neighbourhood of
\((0,\ldots,0)\). Then, as \(h\downarrow0\),
\[
\Pi_0(h)
=
-\frac{1}{2n!}
\frac{u''(\mu)}
     {u'(\mu)}
\left(D_0^nX\right)^2
h^n
+
o(h^n).
\]
Equivalently,
\[
\Pi_0(h)
=
-\frac12
\frac{u''(\mu)}
     {u'(\mu)}
\operatorname{Var}
\!\left(
\mathbb{E}[X\mid\mathcal{F}_h]
\right)
+
o(h^n),
\]
where
\[
\operatorname{Var}
\!\left(
\mathbb{E}[X\mid\mathcal{F}_h]
\right)
=
\frac1{n!}
(D_0^nX)^2
h^n
+
o(h^n).
\]
\end{theorem}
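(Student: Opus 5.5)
The plan is to follow the template of the second- and third-chaos propositions verbatim, with $n$ in place of $2$ and $3$. Set $Y_h=\mathbb{E}[X\mid\mathcal{F}_h]-\mu$. Since conditional expectation acts on multiple integrals by restricting the kernel, we have $Y_h=I_n(f\mathbf{1}_{[0,h]^n})$.

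\textbf{Step 1 (second moment).} The chaos isometry gives
\[
\mathbb{E}[Y_h^2]=n!\int_{[0,h]^n}f(s_1,\dots,s_n)^2\,ds_1\cdots ds_n .
\]
Continuity of $f$ near the origin gives $\sup_{[0,h]^n}|f^2-f(0,\dots,0)^2|\to0$, and the cube has volume $h^n$. Hence
\[
\operatorname{Var}(\mathbb{E}[X\mid\mathcal{F}_h])=\mathbb{E}[Y_h^2]=n!\,f(0,\dots,0)^2h^n+o(h^n).
\]
We identify $D_0^nX=n!\,f(0,\dots,0)$, as in the earlier cases, by iterating $D_t I_n(f)=nI_{n-1}(f(\cdot,t))$ and evaluating the continuous kernel at the origin. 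Then $n!f(0)^2=\frac{1}{n!}(D_0^nX)^2$, which is the stated variance formula.

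\textbf{Step 2 (higher moments).} Nelson's hypercontractivity on the fixed $n$-th chaos gives, for $p=3$,
\[
\mathbb{E}|Y_h|^3\le C\,(\mathbb{E}[Y_h^2])^{3/2}=O(h^{3n/2})=o(h^n).
\]
The same inequality also shows $Y_h\to0$ in every $L^p$.

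\textbf{Step 3 (expansion of both sides).} Taylor-expand $u(\mu+y)=u(\mu)+u'(\mu)y+\frac12u''(\mu)y^2+R(y)$ and use $\mathbb{E}Y_h=0$. This yields
\[
\mathbb{E}[u(\mu+Y_h)]=u(\mu)+\tfrac12u''(\mu)\mathbb{E}[Y_h^2]+\mathbb{E}[R(Y_h)].
\]
On the other side, first show $\Pi_0(h)\to0$ by monotonicity and continuity of $u$, exactly as in Proposition~\ref{prop:arrow-pratt-counterexample}. Then invert with the local $C^1$ inverse $v=u^{-1}$ near $u(\mu)$, writing $\mu-\Pi_0(h)=v(\mathbb{E}u(\mu+Y_h))$ and expanding $v$ to first order. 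This gives
\[
\Pi_0(h)=-\frac{1}{u'(\mu)}\Bigl(\tfrac12u''(\mu)\mathbb{E}[Y_h^2]+o(h^n)\Bigr)+O\bigl((\mathbb{E}Y_h^2)^2\bigr).
\]
The remainder $O\bigl((\mathbb{E}Y_h^2)^2\bigr)=O(h^{2n})$ is $o(h^n)$. This route avoids the slightly circular "$\Pi_0(h)=O(h^n)$" step used earlier, since that bound now comes out of the inversion. Substituting Step 1 gives both forms of the theorem.

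\textbf{Main obstacle: controlling $\mathbb{E}[R(Y_h)]$.} The bound $R(y)=O(|y|^3)$ holds only locally, while $Y_h$ is unbounded, so this is the real difficulty. My first attempt is to assume, as implicitly done in the $n=2,3$ cases, that $u\in C^3$ with $|u'''|$ of at most polynomial (or suitably exponential) growth on the relevant domain. Then $|R(y)|\le C|y|^3(1+|y|^k)$ globally. By H\"older and hypercontractivity, $\mathbb{E}[|Y_h|^3(1+|Y_h|^k)]=O(h^{3n/2})$, since all $L^p$ norms of $Y_h$ are comparable to $\|Y_h\|_2=O(h^{n/2})$.

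If only $u\in C^2$ is available, I would instead split the expectation on the events $\{|Y_h|\le\delta\}$ and its complement:
\begin{itemize}
\item On $\{|Y_h|\le\delta\}$, use the uniform continuity of $u''$ on $[\mu-\delta,\mu+\delta]$. This gives $R(y)=o(y^2)$, and therefore a contribution of $o(\mathbb{E}Y_h^2)=o(h^n)$.
\item On the complement, use Chebyshev together with high moments from hypercontractivity. Assuming $u$ has polynomial growth, these make the tail contribution $O(h^{nq/2})$ for arbitrary $q$.
\end{itemize}
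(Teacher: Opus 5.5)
Your proposal is correct and follows the paper's proof essentially step for step: $Y_h=I_n(f\mathbf{1}_{[0,h]^n})$, then the chaos isometry plus continuity at the origin to get $\mathbb{E}[Y_h^2]=n!\,f(0,\dots,0)^2h^n+o(h^n)$, then Nelson's hypercontractivity to make $\mathbb{E}|Y_h|^3=O(h^{3n/2})=o(h^n)$, then second-order Taylor matching of the two sides, and finally $D_0^nX=n!\,f(0,\dots,0)$. Your two refinements close gaps the paper leaves implicit without changing the argument: you obtain $\Pi_0(h)\to0$ and its size $O(h^n)$ from the local inverse $v=u^{-1}$, where the paper simply asserts $\Pi_0(h)=O(h^n)$, and you control the global Taylor remainder $\mathbb{E}[R(Y_h)]$ under a growth condition on $u$, which the paper writes as $O(\mathbb{E}|Y_h|^3)$ without justification.
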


\begin{proof}
Define
\[
Y_h
=
\mathbb{E}[X\mid\mathcal{F}_h]-\mu.
\]
Since
\[
\mathbb{E}[I_n(f)\mid\mathcal{F}_h]
=
I_n(f\,\mathbf1_{[0,h]^n}),
\]
we have
\[
Y_h
=
I_n(f\,\mathbf1_{[0,h]^n}).
\]

Applying the Wiener chaos isometry,
\[
\mathbb{E}[Y_h^2]
=
n!
\,
\|f\,\mathbf1_{[0,h]^n}\|_{L^2([0,T]^n)}^2,
\]
gives
\[
\mathbb{E}[Y_h^2]
=
n!
\int_{[0,h]^n}
f(s_1,\ldots,s_n)^2
\,ds_1\cdots ds_n.
\]
Since \(f\) is continuous at the origin,
\[
\mathbb{E}[Y_h^2]
=
n!
f(0,\ldots,0)^2
h^n
+
o(h^n).
\]

Furthermore, Nelson's hypercontractivity inequality implies
\[
\mathbb{E}|Y_h|^3
=
O\!\left(
(\mathbb{E}[Y_h^2])^{3/2}
\right)
=
O(h^{3n/2})
=
o(h^n),
\]
since \(n\ge2\).

A second-order Taylor expansion gives
\[
E[u(\mu+Y_h)]
=
u(\mu)
+
\frac12u''(\mu)E[Y_h^2]
+
O(E|Y_h|^3),
\]
and hypercontractivity implies
\[
E|Y_h|^3=o(h^n).
\]
Hence
\[
E[u(\mu+Y_h)]
=
u(\mu)
+
\frac12u''(\mu)E[Y_h^2]
+
o(h^n).
\]

Similarly,
\[
u(\mu-\Pi_0(h))
=
u(\mu)
-
u'(\mu)\Pi_0(h)
+
\frac12u''(\mu)\Pi_0(h)^2
+
o(\Pi_0(h)^2).
\]
Since \(\Pi_0(h)=O(h^n)\), it follows that
\[
\Pi_0(h)^2=O(h^{2n})=o(h^n),
\]
and therefore
\[
u(\mu-\Pi_0(h))
=
u(\mu)
-
u'(\mu)\Pi_0(h)
+
o(h^n).
\]

Equating the two expansions gives
\[
\Pi_0(h)
=
-
\frac12
\frac{u''(\mu)}
     {u'(\mu)}
n!
f(0,\ldots,0)^2
h^n
+
o(h^n).
\]

Finally,
\[
D_0^nX
=
n!
f(0,\ldots,0),
\]
so that
\[
f(0,\ldots,0)^2
=
\frac1{(n!)^2}
(D_0^nX)^2.
\]
Substituting this identity yields
\[
\Pi_0(h)
=
-\frac1{2n!}
\frac{u''(\mu)}
     {u'(\mu)}
(D_0^nX)^2
h^n
+
o(h^n),
\]
which completes the proof.
\end{proof}

\begin{remark}
Theorem~\ref{thm:hierarchy} establishes a hierarchy of dynamic Arrow--Pratt
coefficients indexed by the Wiener chaos order. A pure \(n\)-th Wiener
chaos contributes for the first time at order \(h^n\), and its leading
coefficient is determined entirely by two quantities: the classical
Arrow--Pratt coefficient of absolute risk aversion, $-\frac{u''(\mu)}{u'(\mu)}$,
and the \(n\)-th Malliavin derivative evaluated at the initial time, $D_0^nX$.

Consequently, the Wiener chaos decomposition induces a natural
hierarchy of dynamic risk premia in which the chaos order determines the
time scale at which uncertainty first affects the certainty equivalent.

It is noteworthy that higher derivatives of the utility function do not
appear in the leading-order term for a pure Wiener chaos. Their role
emerges only when several Wiener chaos orders interact, giving rise to
higher-order preference measures such as prudence and temperance. This
is the subject of the following section.
\end{remark}

\section{Mixed Wiener Chaos and Higher-Order Risk Preferences}

The previous section established that, for payoffs belonging to a pure Wiener chaos, the leading-order dynamic risk premium is completely characterised by the classical Arrow--Pratt coefficient of absolute risk aversion together with the corresponding Malliavin derivative. Although the order of the Wiener chaos determines the time scale at which uncertainty first enters the certainty equivalent, the leading-order coefficient itself depends only on the local variance of the conditional payoff increment. Consequently, higher-order preference measures such as prudence and temperance do not arise in the pure-chaos setting.

The purpose of the present section is to show that this changes fundamentally once several Wiener chaos orders are allowed to interact. Mixed Wiener chaos expansions generate non-trivial interactions between orthogonal chaos components, giving rise to higher-order moments of the conditional payoff increment that contribute to the expected utility expansion. At the same time, the nonlinear transformation from expected utility to the certainty equivalent introduces additional higher-order correction terms. Together, these effects lead naturally to the appearance of higher-order preference measures, including prudence, temperance and their higher-order analogues.

We begin by analysing the simplest non-trivial mixed Wiener chaos expansion involving first- and second-order chaos components. This example illustrates the underlying mechanism and shows explicitly how the classical measures of higher-order risk preferences emerge from the interaction of different chaos orders. We then extend the analysis to general finite mixed Wiener chaos expansions, deriving a unified representation based on Bell polynomials before illustrating the theory with several examples.

\subsection{Mixed Wiener Chaos Expansions}
\label{subsec:mixed-chaos}
The preceding subsections show that, for a payoff belonging to a pure
Wiener chaos, the leading-order dynamic risk premium depends only on
the classical Arrow--Pratt coefficient of absolute risk aversion. This
is because the first non-vanishing contribution to expected utility is
generated by the variance of the conditional payoff increment.

The situation changes when different Wiener chaos orders are combined.
Mixed chaos expansions generate non-zero mixed moments, so that higher
derivatives of the utility function enter the small-time expansion.
Moreover, the nonlinear transformation from expected utility to the
certainty equivalent introduces additional higher-order terms. Thus,
higher-order preference characteristics arise both through interactions
between distinct chaos components and through the inversion of the
utility function.

We first illustrate this structure using the simplest non-trivial mixed
expansion,
\begin{equation*}
X
=
\mu+I_1(f_1)+I_2(f_2),
\label{eq:mixed-first-second-chaos}
\end{equation*}
where $\mu=\mathbb{E}[X]$, \(f_1\in L^2([0,T])\), and
\(f_2\in L_s^2([0,T]^2)\). We assume that \(f_1\) is continuously
differentiable at the origin and that \(f_2\) is continuous at
\((0,0)\).

As before, we define $Y_h=\mathbb{E}[X\mid\mathcal{F}_h]-\mu$. Then
\[
Y_h=A_h+B_h,
\]
where $A_h=I_1(f_1\mathbf{1}_{[0,h]})$ and $B_h=I_2(f_2\mathbf{1}_{[0,h]^2})$. The first-chaos component satisfies
\[
\mathbb{E}[A_h^2]
=
\int_0^h f_1(s)^2\,ds
=
f_1(0)^2h
+
f_1(0)f_1'(0)h^2
+
o(h^2).
\]
The second-chaos component satisfies
\[
\mathbb{E}[B_h^2]
=
2
\int_{[0,h]^2}
f_2(s,t)^2\,ds\,dt
=
2f_2(0,0)^2h^2
+
o(h^2).
\]
Since different Wiener chaoses are orthogonal, i.e. $\mathbb{E}[A_hB_h]=0$, we have
\begin{equation}
\mathbb{E}[Y_h^2]
=
f_1(0)^2h
+
\left(
f_1(0)f_1'(0)+2f_2(0,0)^2
\right)h^2
+
o(h^2).
\label{eq:mixed-second-moment}
\end{equation}

The third moment contains the first genuinely mixed-chaos contribution.
The product formula for multiple Wiener integrals gives\footnote{Here $g^{\otimes2}=g\otimes g$ denotes the tensor product,
$(g^{\otimes2})(s,t)=g(s)g(t)$.}
\[
A_h^2
=
I_2\!\left(
(f_1\mathbf{1}_{[0,h]})^{\otimes 2}
\right)
+
\|f_1\mathbf{1}_{[0,h]}\|_{L^2([0,T])}^2.
\]
It follows that
\[
\mathbb{E}[A_h^2B_h]
=
2
\left\langle
(f_1\mathbf{1}_{[0,h]})^{\otimes 2},
f_2\mathbf{1}_{[0,h]^2}
\right\rangle_{L^2([0,T]^2)}.
\]
Therefore,
\[
\mathbb{E}[A_h^2B_h]
=
2
\int_{[0,h]^2}
f_1(s)f_1(t)f_2(s,t)\,ds\,dt
=
2f_1(0)^2f_2(0,0)h^2
+
o(h^2).
\]
All other contributions to the third moment are either zero or of
higher order. In particular, $\mathbb{E}[A_h^3]=0$, $\mathbb{E}[A_hB_h^2]=0$, while $\mathbb{E}[B_h^3]=O(h^3)$. Consequently,
\begin{equation*}
\mathbb{E}[Y_h^3]
=
6f_1(0)^2f_2(0,0)h^2
+
o(h^2).
\label{eq:mixed-third-moment}
\end{equation*}

At order \(h^2\), the fourth moment is generated entirely by the
first-chaos component. Since \(A_h\) is Gaussian, $\mathbb{E}[A_h^4]=3\left(\mathbb{E}[A_h^2]\right)^2$. Thus,
\begin{equation}
\mathbb{E}[Y_h^4]
=
3f_1(0)^4h^2
+
o(h^2).
\label{eq:mixed-fourth-moment}
\end{equation}
By hypercontractivity, all moments of order five and higher contribute
only \(o(h^2)\) to the expected-utility expansion.

We may now expand expected utility. Using $\mathbb{E}[Y_h]=0$, together with
\eqref{eq:mixed-second-moment}--\eqref{eq:mixed-fourth-moment}, we obtain
\begin{equation}
\mathbb{E}[u(\mu+Y_h)]
=
u(\mu)+a_1h+a_2h^2+o(h^2),
\label{eq:mixed-expected-utility-expansion}
\end{equation}
where
\begin{equation*}
a_1
=
\frac12u''(\mu)f_1(0)^2
\label{eq:mixed-a1}
\end{equation*}
and
\begin{align}
a_2
={}&
\frac12u''(\mu)f_1(0)f_1'(0)
+
u''(\mu)f_2(0,0)^2
\nonumber\\
&+
u'''(\mu)f_1(0)^2f_2(0,0)
+
\frac18u^{(4)}(\mu)f_1(0)^4.
\label{eq:mixed-a2}
\end{align}

Writing the dynamic risk premium as
\begin{equation*}
\Pi_0(h)
=
c_1h+c_2h^2+o(h^2).
\label{eq:mixed-risk-premium-expansion}
\end{equation*}
and expanding the right-hand side of
\[
\mathbb{E}[u(\mu+Y_h)]
=
u(\mu-\Pi_0(h))
\]
gives
\[
u(\mu-\Pi_0(h))
=
u(\mu)
-
u'(\mu)c_1h
+
\left(
-u'(\mu)c_2
+
\frac12u''(\mu)c_1^2
\right)h^2
+
o(h^2).
\]
Comparison with
\eqref{eq:mixed-expected-utility-expansion} yields
\begin{equation}
c_1
=
-\frac{a_1}{u'(\mu)}
=
-\frac12
\frac{u''(\mu)}{u'(\mu)}
f_1(0)^2
\label{eq:mixed-c1}
\end{equation}
and
\begin{equation*}
c_2
=
-\frac{a_2}{u'(\mu)}
+
\frac{u''(\mu)}{2u'(\mu)}c_1^2.
\label{eq:mixed-c2-recursive}
\end{equation*}
Substituting \eqref{eq:mixed-a2} and \eqref{eq:mixed-c1} gives
\begin{align}
c_2
={}&
-\frac12
\frac{u''(\mu)}{u'(\mu)}
f_1(0)f_1'(0)
-
\frac{u''(\mu)}{u'(\mu)}
f_2(0,0)^2
\nonumber\\
&-
\frac{u'''(\mu)}{u'(\mu)}
f_1(0)^2f_2(0,0)
\nonumber\\
&-
\frac18
\left[
\frac{u^{(4)}(\mu)}{u'(\mu)}
-
\frac{u''(\mu)^3}{u'(\mu)^3}
\right]
f_1(0)^4.
\label{eq:mixed-c2}
\end{align}

Proposition~\ref{prop:mixed-chaos-order-two} below shows that, in the presence of mixed Wiener chaos components, the dynamic risk premium admits a decomposition in terms of the classical higher-order preference measures of decision theory. In particular, the coefficients of the asymptotic expansion are naturally expressed through Arrow--Pratt absolute risk aversion, Kimball's prudence, and temperance, thereby providing a direct link between the Malliavin-chaos representation of uncertainty and the economics of higher-order risk attitudes.


\begin{prop}
\label{prop:mixed-chaos-order-two}
Suppose that \(X\) has a mixed Wiener chaos expansion of order at most two,
\[
X=\mu+I_1(f_1)+I_2(f_2),
\]
where \(f_1\) is continuously differentiable at the origin and \(f_2\) is
continuous at \((0,0)\). Let
\[
A(\mu)=-\frac{u''(\mu)}{u'(\mu)},\qquad
P(\mu)=-\frac{u'''(\mu)}{u''(\mu)},\qquad
T(\mu)=-\frac{u^{(4)}(\mu)}{u'''(\mu)}
\]
denote absolute risk aversion, absolute prudence, and absolute temperance,
respectively.

Define the kernel-dependent coefficients
\[
\alpha_1:=\frac12 f_1(0)^2,
\]
\[
\alpha_2
:=
\frac12 f_1(0)f_1'(0)+f_2(0,0)^2,
\]
\[
\beta_2
:=
f_1(0)^2f_2(0,0),
\]
and
\[
\gamma_2
:=
\frac18f_1(0)^4.
\]
Then, as \(h\downarrow0\),
\[
\Pi_0(h)
=
A(\mu)\alpha_1 h
+
\left[
A(\mu)\alpha_2
-
A(\mu)P(\mu)\beta_2
+
A(\mu)
\bigl(P(\mu)T(\mu)-A(\mu)^2\bigr)\gamma_2
\right]h^2
+
o(h^2).
\]
\end{prop}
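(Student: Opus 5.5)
The plan is to take the expansions \eqref{eq:mixed-c1} and \eqref{eq:mixed-c2} as given, justify the remainder estimates that produce them, and then rewrite the coefficients algebraically in terms of $A$, $P$ and $T$. The algebraic part is routine. Divide each utility derivative by $u'(\mu)$ and telescope the ratios:
\[
\frac{u''}{u'}=-A,\qquad
\frac{u'''}{u'}=\frac{u'''}{u''}\cdot\frac{u''}{u'}=(-P)(-A)=AP,
\]
\[
\frac{u^{(4)}}{u'}=\frac{u^{(4)}}{u'''}\cdot\frac{u'''}{u''}\cdot\frac{u''}{u'}=(-T)(-P)(-A)=-APT,\qquad
\frac{(u'')^3}{(u')^3}=-A^3,
\]
all evaluated at $\mu$.

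Substituting these identities into \eqref{eq:mixed-c1} gives $c_1=A\alpha_1$. In \eqref{eq:mixed-c2}, the first two terms combine to $A\alpha_2$. The prudence term becomes $-AP\beta_2$. The bracketed fourth-order term becomes
\[
-\tfrac18\bigl(-APT+A^3\bigr)f_1(0)^4=A\bigl(PT-A^2\bigr)\gamma_2 .
\]
This is exactly the asserted formula.

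The substantive work is making the expansion behind \eqref{eq:mixed-c2} rigorous. I would assume, as is implicit in the statement, that $u\in C^5$ near $\mu$ with enough growth control that Taylor remainders can be integrated. The steps are as follows.

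First, expand $u(\mu+Y_h)$ to fourth order with a remainder bounded by $C\,|Y_h|^5$, after localising to a neighbourhood of $\mu$. Since $Y_h$ lives in chaoses of order at most two, hypercontractivity gives $\mathbb{E}|Y_h|^5\le C(\mathbb{E}Y_h^2)^{5/2}=O(h^{5/2})=o(h^2)$.

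Second, verify the moment expansions \eqref{eq:mixed-second-moment}--\eqref{eq:mixed-fourth-moment}. The main check is the set of cross terms in $\mathbb{E}[Y_h^3]$ and $\mathbb{E}[Y_h^4]$ that the text only summarises.
\begin{itemize}
\item By the product formula, $A_h^3$ lies in chaoses one and three, and $A_hB_h^2$ has odd chaos parity. Hence $\mathbb{E}[A_h^3]=\mathbb{E}[A_hB_h^2]=\mathbb{E}[A_h^3B_h]=0$.
\item By hypercontractivity and H\"older, $\mathbb{E}[A_h^2B_h^2]=O(h\cdot h^2)$, $\mathbb{E}[B_h^3]=O(h^3)$ and $\mathbb{E}[B_h^4]=O(h^4)$.
\end{itemize}
These bounds leave only $2f_1(0)^2f_2(0,0)h^2$ in $\mathbb{E}[A_h^2B_h]$, which gives the third moment $6f_1(0)^2f_2(0,0)h^2+o(h^2)$, and only $3(\mathbb{E}A_h^2)^2$ in the fourth moment. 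This yields \eqref{eq:mixed-a2}.

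Third, invert $u$. By the inverse function theorem, $v=u^{-1}$ is $C^2$ near $u(\mu)$. Since $\mathbb{E}[u(\mu+Y_h)]=u(\mu)+a_1h+a_2h^2+o(h^2)$, a second-order Taylor expansion of $v$ gives $\Pi_0(h)=c_1h+c_2h^2+o(h^2)$ directly. This avoids the circular a priori assumption $\Pi_0(h)=O(h)$, and it reproduces $c_2=-a_2/u'+\tfrac{u''}{2u'}c_1^2$ because $v''(u(\mu))=-u''(\mu)/u'(\mu)^3$.

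I expect the main obstacle to be the integrability of the Taylor remainder rather than the algebra. A pointwise $O(y^5)$ bound only holds locally. Away from $\mu$, one needs a growth condition on $u^{(5)}$, for instance polynomial growth, so that the global remainder is dominated by $C(|Y_h|^5+|Y_h|^{m})$. Hypercontractivity controls every moment of a finite-chaos variable, so this bound suffices. I would state this growth hypothesis explicitly and then invoke hypercontractivity for all moments, exactly as in the proof of Theorem~\ref{thm:hierarchy}.
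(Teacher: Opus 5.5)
Your proposal is correct and follows the paper's own proof, which consists exactly of substituting the telescoped ratios $u''/u'=-A$, $u'''/u'=AP$, $u^{(4)}/u'=-APT$, $(u''/u')^3=-A^3$ into \eqref{eq:mixed-c1}--\eqref{eq:mixed-c2}. Your additional analytic steps supply rigour the paper leaves implicit: the hypercontractive bound on the integrated Taylor remainder, the parity and H\"older control of the cross moments, and obtaining $c_1,c_2$ by expanding $v=u^{-1}$ rather than positing $\Pi_0(h)=c_1h+c_2h^2+o(h^2)$ a priori. One caveat on the hypothesis you propose: polynomial growth of $u^{(5)}$ excludes the exponential utility used throughout the paper, so it is better to allow $|u^{(5)}(x)|\le Ce^{c|x|}$ and use that $\mathbb{E}[e^{c'|Y_h|}]$ stays bounded as $h\downarrow0$, because the second-chaos part $I_2(f_2\mathbf 1_{[0,h]^2})$ has finite exponential moments of any fixed order once its kernel norm, which is $O(h)$, is small enough.
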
 
 
\begin{proof}
Using equations~\eqref{eq:mixed-c1} and~\eqref{eq:mixed-c2}, we have
\[
\begin{aligned}
\Pi_0(h)
={}&
-\frac12\frac{u''(\mu)}{u'(\mu)}
f_1(0)^2h  +
\Bigg\{
-\frac12\frac{u''(\mu)}{u'(\mu)}
f_1(0)f_1'(0)
-
\frac{u''(\mu)}{u'(\mu)}
f_2(0,0)^2 \\
&\qquad
-
\frac{u'''(\mu)}{u'(\mu)}
f_1(0)^2f_2(0,0)  
-
\frac18
\left[
\frac{u^{(4)}(\mu)}{u'(\mu)}
-
\frac{u''(\mu)^3}{u'(\mu)^3}
\right]
f_1(0)^4
\Bigg\}h^2
+
o(h^2). 
\end{aligned}
\]

The definitions of risk aversion, prudence, and temperance imply
\[
-\frac{u''(\mu)}{u'(\mu)}=A(\mu),
\]
\[
\frac{u'''(\mu)}{u'(\mu)}
=
A(\mu)P(\mu),
\]
and
\[
\frac{u^{(4)}(\mu)}{u'(\mu)}
=
-A(\mu)P(\mu)T(\mu).
\]
Moreover,
\[
\frac{u''(\mu)^3}{u'(\mu)^3}
=
-A(\mu)^3.
\]
It follows that
\[
-\frac18
\left[
\frac{u^{(4)}(\mu)}{u'(\mu)}
-
\frac{u''(\mu)^3}{u'(\mu)^3}
\right]
f_1(0)^4
=
A(\mu)
\bigl(P(\mu)T(\mu)-A(\mu)^2\bigr)\gamma_2.
\]

Using the definitions of
\(\alpha_1,\alpha_2,\beta_2\), and \(\gamma_2\) therefore gives
\[
\Pi_0(h)
=
A(\mu)\alpha_1h
+
\left[
A(\mu)\alpha_2
-
A(\mu)P(\mu)\beta_2
+
A(\mu)
\bigl(P(\mu)T(\mu)-A(\mu)^2\bigr)\gamma_2
\right]h^2
+
o(h^2),
\]
as claimed.
\end{proof}

The decomposition in Proposition~\ref{prop:mixed-chaos-order-two}
reveals how different preference measures govern distinct components of
the dynamic risk premium. The leading-order term depends solely on
Arrow--Pratt absolute risk aversion, reflecting the familiar role of
variance in local risk evaluation.

At order \(h^2\), the interaction between the first- and second-order
Wiener chaos components gives rise to the prudence term
\(A(\mu)P(\mu)\). More precisely, this contribution originates from the
mixed third moment \(\mathbb{E}[A_h^2B_h]\) and therefore disappears if
either the first- or the second-chaos component is absent. The
appearance of prudence is thus a direct consequence of interactions
between distinct Wiener chaos orders.

The \(h^2\)-coefficient also contains the additional combination
\[
A(\mu)\bigl(P(\mu)T(\mu)-A(\mu)^2\bigr),
\]
which separates naturally into two distinct effects. The product
\(A(\mu)P(\mu)T(\mu)\) represents the influence of temperance through
the fourth derivative of the utility function. By contrast, the
correction term \(-A(\mu)^3\) is not associated with an additional
preference measure. Instead, it is generated by the nonlinear inversion
required to transform expected utility into the corresponding certainty
equivalent.

Consequently, higher-order dynamic risk premia consist of two distinct
mechanisms: direct effects generated by higher-order moments of the
conditional payoff distribution, and indirect effects generated by the
nonlinear transformation from expected utility to certainty equivalents.

\subsection{General Coefficient Representation}
\label{subsec:General coefficient representation}

We now consider a finite mixed Wiener chaos expansion
\begin{equation*}
X
=
\mu+\sum_{n=1}^{N}I_n(f_n),
\label{eq:general-mixed-chaos}
\end{equation*}
and define
\begin{equation}
Y_h
=
\mathbb{E}[X\mid\mathcal{F}_h]-\mu
=
\sum_{n=1}^{N}
I_n\!\left(
f_n\mathbf{1}_{[0,h]^n}
\right).
\label{eq:general-mixed-increment}
\end{equation}

To derive a general representation of the coefficients, we assume that the
expected utility admits an asymptotic expansion of the form
\begin{equation*}
\mathbb{E}[u(\mu+Y_h)]
\sim
u(\mu)+\sum_{m\ge1}a_mh^m,
\qquad h\downarrow0,
\label{eq:general-a-expansion}
\end{equation*}
where the asymptotic relation is understood in the sense that, for every
\(M\ge1\),
\[
\mathbb{E}[u(\mu+Y_h)]
=
u(\mu)
+
\sum_{m=1}^{M}a_mh^m
+
o(h^M),
\qquad h\downarrow0.
\]
For any function \(f(h)\) admitting a small-time expansion
\[
f(h)\sim\sum_{m=0}^{\infty}c_mh^m,
\]
we write
\[
[h^m]\,f(h):=c_m
\]
for the operator extracting the coefficient of \(h^m\) in this expansion.
The coefficients \(a_m\) can then be expressed in terms of the asymptotic
expansions of the moments of \(Y_h\). 

For an integer \(\ell\geq2\), let
\[
\kappa_{\ell,m}
=
[h^m]\mathbb{E}[Y_h^\ell],
\]
That is, \(\kappa_{\ell,m}\) is the coefficient of \(h^m\) in the small-time expansion of \(\mathbb{E}[Y_h^\ell]\).
Taylor expansion of the utility function gives
\begin{equation}
a_m
=
\sum_{\ell=2}^{2m}
\frac{u^{(\ell)}(\mu)}{\ell!}
\kappa_{\ell,m}.
\label{eq:general-am}
\end{equation}

Since $E[Y_h]=0$, the constant term in the Taylor expansion contributes only \(u(\mu)\), while
the linear term vanishes, hence the sum starts at $\ell=2$. The upper bound \(2m\) reflects the fact that, when a non-degenerate
first-chaos component is present, hypercontractivity and the Wiener
chaos isometry imply
\[
\|Y_h\|_{L^p}=O(h^{1/2}),
\]
and hence
\[
\mathbb{E}|Y_h|^\ell=O(h^{\ell/2}).
\]
Therefore, only moments satisfying \(\ell\leq 2m\) can contribute to
the coefficient of \(h^m\).

The moment coefficients in \eqref{eq:general-am} are determined by the
product formula for multiple Wiener integrals,
\begin{equation}
I_p(f)I_q(g)
=
\sum_{r=0}^{p\wedge q}
r!
\binom{p}{r}
\binom{q}{r}
I_{p+q-2r}
\left(
f\widetilde{\otimes}_r g
\right),
\label{eq:wiener-product-formula}
\end{equation}
where
\(\widetilde{\otimes}_r\) denotes the symmetrised contraction of order
\(r\).\footnote{The symmetrised contraction of order \(r\) is obtained by
identifying and integrating out \(r\) common variables of the two kernels,
followed by symmetrisation over the remaining variables. It is the natural
operation that appears in the product formula for multiple Wiener integrals;
see Nualart (2006, Section~1.1.2) for a formal definition.} After repeated application of
\eqref{eq:wiener-product-formula}, only complete contractions, leaving
a zeroth-order Wiener integral, survive upon taking expectations.

Next, write
\begin{equation*}
\Pi_0(h)
=
\sum_{m\geq1}c_mh^m.
\label{eq:general-c-expansion}
\end{equation*}
Let $v=u^{-1}$. Since
\[
u(\mu-\Pi_0(h))
=
u(\mu)+\sum_{m\geq1}a_mh^m,
\]
we have
\[
\Pi_0(h)
=
\mu-
v\left(
u(\mu)+\sum_{m\geq1}a_mh^m
\right).
\]
Expanding \(v\) around \(u(\mu)\) gives
\begin{equation*}
c_m
=
-
\sum_{k=1}^{m}
\frac{v^{(k)}(u(\mu))}{k!}
\sum_{\substack{j_1,\ldots,j_k\geq1\\
j_1+\cdots+j_k=m}}
a_{j_1}\cdots a_{j_k}.
\label{eq:general-cm-inverse}
\end{equation*}

Equivalently, in terms of the partial Bell polynomials\footnote{The (partial exponential) Bell polynomials \(B_{n,k}\) arise
naturally in Faà di Bruno's formula, the higher-order chain rule for
composite functions. They provide a compact combinatorial representation of
all possible ways in which an \(n\)-th derivative of a composition can be
partitioned into \(k\) derivatives of the outer function and corresponding
products of derivatives of the inner function. In particular,
\[
\frac{d^n}{dh^n}u(g(h))
=
\sum_{k=1}^n
u^{(k)}(g(h))
B_{n,k}\!\left(g'(h),g''(h),\ldots,g^{(n-k+1)}(h)\right),
\]
which is the form used throughout this proof. We refer to Comtet (1974,
Chapter~3) or Johnson (2002) for a detailed treatment.},
\begin{equation*}
c_m
=
-\frac1{m!}
\sum_{k=1}^{m}
v^{(k)}(u(\mu))
B_{m,k}
\left(
1!a_1,2!a_2,\ldots,
(m-k+1)!a_{m-k+1}
\right).
\label{eq:general-cm-bell}
\end{equation*}

For explicit calculations, it is often more convenient to avoid the
derivatives of the inverse utility function. Direct expansion of
\(u(\mu-\Pi_0(h))\) gives the recursive representation
\begin{equation}
c_m
=
-\frac{a_m}{u'(\mu)}
+
\frac1{u'(\mu)}
\sum_{k=2}^{m}
\frac{(-1)^ku^{(k)}(\mu)}{k!}
\sum_{\substack{j_1,\ldots,j_k\geq1\\
j_1+\cdots+j_k=m}}
c_{j_1}\cdots c_{j_k}.
\label{eq:general-cm-recursive}
\end{equation}
Thus \(c_m\) is determined recursively by the expected-utility
coefficients \(a_1,\ldots,a_m\) and the preceding risk-premium
coefficients \(c_1,\ldots,c_{m-1}\).

For example,
\begin{equation*}
c_1
=
-\frac{a_1}{u'(\mu)},
\label{eq:general-c1}
\end{equation*}
\begin{equation*}
c_2
=
-\frac{a_2}{u'(\mu)}
+
\frac{u''(\mu)}{2u'(\mu)}c_1^2,
\label{eq:general-c2}
\end{equation*}
and
\begin{equation*}
c_3
=
-\frac{a_3}{u'(\mu)}
+
\frac{u''(\mu)}{u'(\mu)}c_1c_2
-
\frac{u'''(\mu)}{6u'(\mu)}c_1^3.
\label{eq:general-c3}
\end{equation*}

The preceding calculations may be summarised as follows.

\begin{prop}
\label{prop:general-mixed-chaos}
Let \(X\) admit the finite Wiener chaos expansion
\[
X
=
\mu+\sum_{n=1}^{N}I_n(f_n),
\]
and suppose that the kernels possess sufficient regularity near the
origin for the relevant small-time moment expansions to exist. Let
\(Y_h\) be defined by \eqref{eq:general-mixed-increment}, and write
\[
\mathbb{E}[u(\mu+Y_h)]
=
u(\mu)+\sum_{m\geq1}a_mh^m,
\qquad
\Pi_0(h)
=
\sum_{m\geq1}c_mh^m.
\]
Then
\[
a_m
=
\sum_{\ell=2}^{2m}
\frac{u^{(\ell)}(\mu)}{\ell!}
[h^m]\mathbb{E}[Y_h^\ell],
\]
and
\[
c_m
=
-
\sum_{k=1}^{m}
\frac{(u^{-1})^{(k)}(u(\mu))}{k!}
\sum_{\substack{j_1,\ldots,j_k\geq1\\
j_1+\cdots+j_k=m}}
a_{j_1}\cdots a_{j_k}.
\]
Equivalently, the coefficients \(c_m\) satisfy the recursive relation
\eqref{eq:general-cm-recursive}. The moment coefficients
\[
[h^m]\mathbb{E}[Y_h^\ell]
\]
are determined by the complete contractions generated by the product
formula for multiple Wiener integrals.
\end{prop}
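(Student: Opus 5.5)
We prove Proposition~\ref{prop:general-mixed-chaos} in three stages: the expected-utility coefficients \(a_m\), then the inversion giving \(c_m\), and finally the identification of the moment coefficients with complete contractions. We assume \(u\) is smooth enough (say \(C^{2M+1}\) near \(\mu\), with polynomially bounded derivatives or a localisation argument) to expand to any fixed order \(M\).

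\emph{Step 1 (the coefficients \(a_m\)).} Fix \(M\ge1\). Taylor-expand \(u\) around \(\mu\) to order \(2M\) with Lagrange remainder:
\[
u(\mu+Y_h)=u(\mu)+\sum_{\ell=1}^{2M}\frac{u^{(\ell)}(\mu)}{\ell!}Y_h^\ell+R_{2M+1}(Y_h),\qquad |R_{2M+1}(y)|\le C|y|^{2M+1}.
\]
Take expectations and use \(\mathbb{E}[Y_h]=0\) to remove \(\ell=1\). Bound the remainder as follows. Each \(\|I_n(f_n\mathbf 1_{[0,h]^n})\|_{L^2}=O(h^{n/2})\) by the isometry and local boundedness of the kernels. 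Nelson's hypercontractivity, applied chaos by chaos, then gives \(\|Y_h\|_{L^p}=O(h^{1/2})\) for every \(p\). Hence \(\mathbb{E}|Y_h|^{2M+1}=O(h^{M+1/2})=o(h^M)\). For the same reason the moment \(\mathbb{E}[Y_h^\ell]\) is \(O(h^{\ell/2})\), so \(\ell>2m\) cannot contribute to the coefficient of \(h^m\). Provided each \(\mathbb{E}[Y_h^\ell]\) has an expansion in powers of \(h\), which is exactly the regularity assumption of the proposition, collecting the coefficient of \(h^m\) yields
\[
a_m=\sum_{\ell=2}^{2m}\frac{u^{(\ell)}(\mu)}{\ell!}[h^m]\mathbb{E}[Y_h^\ell].
\]

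\emph{Step 2 (the coefficients \(c_m\)).} Since \(u'(\mu)>0\), the inverse \(v=u^{-1}\) is smooth near \(u(\mu)\), and \(\Pi_0(h)=\mu-v\bigl(u(\mu)+S(h)\bigr)\) with \(S(h)=\sum_{m\le M}a_mh^m+o(h^M)\). Because \(S(h)=O(h)\), Taylor-expanding \(v\) to order \(M\) and multiplying out \(S^k\) gives the coefficient of \(h^m\) as the stated sum over compositions \(j_1+\cdots+j_k=m\). Each omitted term is \(o(h^M)\), since \(S^k\) for \(k>m\) starts at order \(h^k\). In particular \(\Pi_0\) inherits an asymptotic expansion, which justifies writing \(\Pi_0(h)=\sum c_mh^m\). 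The Bell-polynomial form is simply Fa\`a di Bruno applied to \(v\circ(u(\mu)+S)\) at \(h=0\). For the recursion \eqref{eq:general-cm-recursive}, expand instead \(u(\mu-\Pi_0(h))\) in powers of \(\Pi_0\), using \(\Pi_0=O(h)\). Matching the coefficient of \(h^m\) with \(a_m\) isolates \(-u'(\mu)c_m\), and the \(k=1\) term is the only one containing \(c_m\); solving for \(c_m\) gives the recursion. Uniqueness of asymptotic coefficients makes the two representations agree.

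\emph{Step 3 (complete contractions).} Write \(\mathbb{E}[Y_h^\ell]\) as a sum over multi-indices \((n_1,\dots,n_\ell)\) of \(\mathbb{E}\bigl[\prod_i I_{n_i}(f_{n_i}\mathbf 1_{[0,h]^{n_i}})\bigr]\). Iterate the product formula \eqref{eq:wiener-product-formula} to reduce each product to a finite sum of multiple integrals. Only the zeroth-order integrals survive the expectation, since \(\mathbb{E}[I_k(\cdot)]=0\) for \(k\ge1\). These zeroth-order terms are the complete contractions: integrals of products of kernels over \([0,h]^{(n_1+\cdots+n_\ell)/2}\). Their expansions in \(h\) follow from continuity (or Taylor regularity) of the kernels at the origin.

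The main obstacle is the uniform control in Step 1. One must bound the Taylor remainder for a general smooth \(u\) whose domain may not contain the full support of \(Y_h\). I would handle this either by assuming polynomial growth of \(u^{(2M+1)}\) and using H\"older together with all moments of \(Y_h\), or by splitting on \(\{|Y_h|\le\delta\}\) and showing \(\mathbb{P}(|Y_h|>\delta)\) decays faster than any power of \(h\) via hypercontractive moment bounds and Markov's inequality. Hypercontractivity is what makes both routes work.
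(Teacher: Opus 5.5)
Your proposal is correct and follows the same route as the paper's derivation preceding the proposition: Taylor expansion of $u$, with the hypercontractive bound $\|Y_h\|_{L^p}=O(h^{1/2})$ truncating the sum at $\ell=2m$; expansion of $u^{-1}$ for $c_m$; direct expansion of $u(\mu-\Pi_0(h))$ for the recursion; and the product formula for the moments. You also supply the remainder control that the paper leaves implicit. One small repair concerns your localisation route: fast decay of $\mathbb{P}(|Y_h|>\delta)$ alone does not control $\mathbb{E}\bigl[|u(\mu+Y_h)|\mathbf{1}_{\{|Y_h|>\delta\}}\bigr]$, so you also need a uniform bound such as $\sup_{h\le h_0}\mathbb{E}|u(\mu+Y_h)|^{q}<\infty$ for some $q>1$. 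This bound can genuinely fail: for exponential utility and $X=\mu+I_3(\mathbf{1}_{[0,T]}^{\otimes 3})$, already $\mathbb{E}[u(\mu+Y_h)]=-\infty$ for every $h>0$, so your polynomial-growth route, or an explicit integrability hypothesis, is the one to use.
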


Proposition~\ref{prop:general-mixed-chaos} identifies the two distinct
mathematical mechanisms governing higher-order dynamic risk premia.
First, contractions between the Wiener chaos kernels determine the
mixed moments of the conditional payoff increment and therefore the
coefficients \(a_m\) in the expected-utility expansion. Secondly, the
Bell-polynomial or recursive formula accounts for the nonlinear
transformation from expected utility to the certainty equivalent.

For a pure Wiener chaos, the leading contribution is generated only by
the second moment, and the classical Arrow--Pratt coefficient is
therefore sufficient. In a mixed chaos expansion, by contrast,
contractions between different chaos components generate higher mixed
moments. These moments introduce the higher derivatives
\[
u'''(\mu),\quad u^{(4)}(\mu),\quad\ldots
\]
and hence the higher-order preference characteristics associated with
prudence, temperance, and higher risk attitudes. The Wiener chaos
decomposition therefore provides not only a hierarchy of time scales
for dynamic risk premia, but also a systematic representation of the
interactions through which higher-order risk preferences enter the
certainty equivalent.

\subsection{Representation in terms of Malliavin derivatives}

The general coefficient representation obtained in the previous sections may be rewritten
entirely in terms of Malliavin derivatives. It is a standard consequence of the Wiener chaos expansion and the definition of the Malliavin derivative (see, e.g., Nualart, 2006, Section 1.2) that
\[
X
=
\mu+\sum_{n=1}^{N}I_n(f_n)
\]
satisfies
\[
\mathbb{E}
\left[
D_{t_1}\cdots D_{t_n}X
\right]
=
n!
f_n(t_1,\ldots,t_n).
\]
Hence the Wiener kernels are completely determined by the expected
Malliavin derivatives,
\[
f_n(t_1,\ldots,t_n)
=
\frac{1}{n!}
\mathbb{E}
\left[
D_{t_1}\cdots D_{t_n}X
\right].
\]

Consequently,
\[
Y_h
=
\mathbb{E}[X|\mathcal F_h]-\mu
=
\sum_{n=1}^{N}
\frac1{n!}
I_n
\!\left(
\mathbb{E}[D_{\cdot}^{\,n}X]
\,
\mathbf1_{[0,h]^n}
\right),
\]
where
\[
\mathbb{E}[D_{\cdot}^{\,n}X]
=
\mathbb{E}
\left[
D_{t_1}\cdots D_{t_n}X
\right].
\]

The coefficients of the expected-utility expansion therefore admit the
representation
\begin{equation*}
a_m
=
\sum_{\ell=2}^{2m}
\frac{u^{(\ell)}(\mu)}{\ell!}
[h^m]
\mathbb{E}
\left[
\left(
\sum_{n=1}^{N}
\frac1{n!}
I_n
\!\left(
\mathbb{E}[D_{\cdot}^{\,n}X]
\mathbf1_{[0,h]^n}
\right)
\right)^\ell
\right].
\label{eq:malliavin-am}
\end{equation*}

Likewise, the dynamic risk premium satisfies
\begin{equation*}
c_m
=
-
\sum_{k=1}^{m}
\frac{(u^{-1})^{(k)}(u(\mu))}{k!}
\sum_{\substack{j_1,\ldots,j_k\ge1\\
j_1+\cdots+j_k=m}}
a_{j_1}\cdots a_{j_k},
\label{eq:malliavin-cm}
\end{equation*}
or, equivalently, the recursive representation
\eqref{eq:general-cm-recursive}. Thus every coefficient of the dynamic
risk-premium expansion is determined entirely by the expected Malliavin
derivatives together with the nonlinear transformation from expected
utility to the certainty equivalent.

\begin{theorem}[Hermite representation of the dynamic risk premium]
\label{thm:hermite}

Suppose that
\[
X
=
\mu+\sum_{n=1}^{N}I_n(f_n),
\]
and that each Wiener kernel is continuous at the initial diagonal. Define
\[
\delta_n
=
\mathbb E[D_0^nX]
=
n!f_n(0,\ldots,0).
\]
Then, as \(h\downarrow0\),
\[
Y_h
=
\mathbb E[X|\mathcal F_h]-\mu
=
\sum_{n=1}^{N}
\frac{\delta_n}{n!}
h^{n/2}
H_n(Z)
+
o(h^{N/2}),
\]
where
\[
Z=\frac{W_h}{\sqrt h}
\sim N(0,1),
\]
where
\[
H_n(x)
=
(-1)^n e^{x^2/2}
\frac{d^n}{dx^n}
e^{-x^2/2}
\]
denotes the probabilists' Hermite polynomial.\footnote{Two normalizations of Hermite polynomials are commonly used in the literature. Throughout this paper we adopt the probabilists' convention, which is standard in probability theory, Malliavin calculus, and Wiener chaos analysis. The alternative physicists' convention differs only by a normalization and is more common in quantum mechanics and mathematical physics.}

Consequently,
\[
\mathbb E[u(\mu+Y_h)]
=
\mathbb E
\left[
u
\left(
\mu+
\sum_{n=1}^{N}
\frac{\delta_n}{n!}
h^{n/2}
H_n(Z)
\right)
\right]
+
o(h^N),
\]
and therefore
\[
\Pi_0(h)
=
\mu-
u^{-1}
\left(
\mathbb E
\left[
u
\left(
\mu+
\sum_{n=1}^{N}
\frac{\delta_n}{n!}
h^{n/2}
H_n(Z)
\right)
\right]
\right)
+
o(h^N).
\]
\end{theorem}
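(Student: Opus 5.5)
The plan rests on one exact identity for constant kernels: for every $n\ge1$ and every constant $c$,
\[
I_n\bigl(c\,\mathbf 1_{[0,h]}^{\otimes n}\bigr)=c\,h^{n/2}H_n\!\left(\frac{W_h}{\sqrt h}\right).
\]
This is the classical relation between multiple Wiener integrals of tensor powers and Hermite polynomials (Nualart, 2006, Section~1.1.1). It follows from $I_n(g^{\otimes n})=\|g\|^nH_n(I_1(g)/\|g\|)$ with $g=\mathbf 1_{[0,h]}$, $\|g\|=\sqrt h$ and $I_1(g)=W_h$. Since $\delta_n=n!f_n(0,\ldots,0)$, i.e.\ $f_n(0,\ldots,0)=\delta_n/n!$, the leading term in the statement is exactly
\[
\sum_{n=1}^N I_n\bigl(f_n(0,\ldots,0)\mathbf 1_{[0,h]^n}\bigr).
\]
The same identity underlies the pure-chaos computations of the Hierarchy Theorem~\ref{thm:hierarchy}.

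\textbf{Step 1 (expansion of $Y_h$).} I decompose each kernel as $f_n\mathbf 1_{[0,h]^n}=f_n(0,\ldots,0)\mathbf 1_{[0,h]^n}+g_{n,h}$, where $g_{n,h}=(f_n-f_n(0,\ldots,0))\mathbf 1_{[0,h]^n}$. Set $R_h=\sum_n I_n(g_{n,h})$. By the Wiener isometry,
\[
\mathbb E[I_n(g_{n,h})^2]=n!\int_{[0,h]^n}(f_n-f_n(0))^2,
\]
which is $h^n$ times the mean square oscillation of $f_n$ over $[0,h]^n$. Continuity at the origin makes this $o(h^n)$, and Nelson's hypercontractivity transfers the bound to every $L^p$ norm. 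To obtain the uniform order $o(h^{N/2})$ of the statement for the lower chaoses $n<N$, I will read ``continuous at the initial diagonal'' in the same spirit as the ``sufficient regularity near the origin'' of Proposition~\ref{prop:general-mixed-chaos}. Concretely, I require the oscillation $\sup_{[0,h]^n}|f_n-f_n(0)|$ to be $o(h^{(N-n)/2})$; this is automatic for $n=N$ and holds, e.g., under local H\"older/Taylor regularity of order exceeding $(N-n)/2$. Under this reading, $\|R_h\|_{L^p}=o(h^{N/2})$ for all $p$.

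\textbf{Step 2 (expected utility).} I write $\tilde Y_h=\sum_n\frac{\delta_n}{n!}h^{n/2}H_n(Z)$, so that $Y_h=\tilde Y_h+R_h$. Then I use the first-order Taylor expansion $u(\mu+Y_h)-u(\mu+\tilde Y_h)=u'(\mu+\tilde Y_h)R_h+\tfrac12u''(\xi)R_h^2$. The first term is handled by expanding $u'(\mu+\tilde Y_h)=u'(\mu)+O(|\tilde Y_h|)$ and using $\mathbb E[R_h]=0$. This gives $\mathbb E[u'(\mu+\tilde Y_h)R_h]=O(\|\tilde Y_h\|_{2}\|R_h\|_{2})=O(h^{1/2})\,o(h^{N/2})$. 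The quadratic term is $O(\|R_h\|_2^2)=o(h^N)$, assuming polynomial growth of $u''$ and Hölder's inequality with hypercontractive moment bounds. To reach the claimed $o(h^N)$ in the mixed term I refine the argument. Orthogonality of chaoses kills the pairing of $u'(\mu)$ with $R_h$, and the surviving terms are products $\tilde Y_h^{k}R_h$. Each such product is controlled by repeated use of the product formula \eqref{eq:wiener-product-formula}: complete contractions of $g_{n,h}$ against constant kernels on $[0,h]$ inherit both the factor $h$ per contracted variable and the oscillation smallness of $g_{n,h}$.

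\textbf{Step 3 (risk premium).} Because $u'(\mu)>0$, the inverse $u^{-1}$ is $C^1$ and locally Lipschitz near $u(\mu)$. Both expected utilities converge to $u(\mu)$, so applying $u^{-1}$ transfers the $o(h^N)$ error from Step~2 directly to $\Pi_0(h)=\mu-u^{-1}(\mathbb E[u(\mu+Y_h)])$.

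\textbf{Main obstacle.} The hard part is the error bookkeeping in Steps~1--2. Plain continuity only gives $o(h^{n/2})$ for the $n$-th chaos, so the stated orders hinge on how much regularity is extracted from the kernel hypothesis and on the cancellations produced by orthogonality and complete contractions. I would first try to prove the contraction estimate for $\mathbb E[\tilde Y_h^kR_h]$ chaos by chaos. Failing that, I would state explicitly the oscillation rate needed for each $f_n$.
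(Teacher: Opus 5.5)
Your route is the paper's: freeze each kernel at the origin, then use $I_n(\mathbf 1_{[0,h]}^{\otimes n})=h^{n/2}H_n(Z)$. Step~1 is sound once you add the oscillation hypothesis, and you are right that something must be added. The paper's proof gets an $o(h^{n/2})$ error in the $n$-th chaos and sums these to ``$o(h^{N/2})$'', which actually only gives $o(h^{1/2})$.

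The genuine gap is the refinement in Step~2. The $k=1$ term of $\mathbb E[u'(\mu+\tilde Y_h)R_h]$ is
\[
u''(\mu)\,\mathbb E[\tilde Y_hR_h]=u''(\mu)\sum_{n=1}^N\delta_n\int_{[0,h]^n}\bigl(f_n-f_n(0,\ldots,0)\bigr).
\]
For $n=1$, your own count gives only $o(h^{(N+1)/2})$: one factor $h$ for the contracted variable times the oscillation $o(h^{(N-1)/2})$. That is not $o(h^N)$ once $N\ge2$. The bound is sharp, not an artefact of the bookkeeping. For $C^1$ kernels, $\delta_1\int_0^h(f_1-f_1(0))\,ds=\tfrac12f_1(0)f_1'(0)h^2+o(h^2)$. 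This is exactly the $\tfrac12u''(\mu)f_1(0)f_1'(0)$ contribution to $a_2$ in \eqref{eq:mixed-a2}, and the frozen-kernel surrogate $\tilde Y_h$ cannot generate it.

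In fact the second and third displays of the theorem fail in general for $N\ge2$, so no contraction argument can close Step~2. The paper's own Vasicek example is a counterexample. It satisfies the theorem's hypothesis and your oscillation condition with $N=2$. Since $K_T'(0)=\kappa K_0$, the true expected utility contains the term $\tfrac{\kappa}{2}u''(\mu)a^2K_0^2h^2$; this is the $\kappa$-term of \eqref{eq:vasicek-calendar-c2}. By contrast, $\mathbb E[u(\mu+\tilde Y_h)]$ with $\delta_1=aK_0$ and $\delta_2=2bK_0^2$ is the quadratic Gaussian expansion with $v_h$ replaced by $K_0^2h$, and it has no such term.

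The paper's claim that these displays ``follow immediately'' hides the same problem. An error $R_h=o(h^{N/2})$ in $L^p$ only bounds the expected-utility error by $\|\tilde Y_h\|\,\|R_h\|+\|R_h\|^2$. What your argument does prove is the following:
\begin{itemize}
\item under your oscillation hypothesis, the statement holds with $o(h^N)$ replaced by $o(h^{(N+1)/2})$;
\item under mere continuity, it holds with $o(h)$, which covers $N=1$.
\end{itemize}
To keep $o(h^N)$ you have two options. One is to assume flatness in mean, via the signed-mean conditions $\int_{[0,h]^n}(f_n-f_n(0,\ldots,0))=o(h^N)$, on top of your sup-oscillation rate. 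For $n=1$ and smooth $f_1$ this means $f_1'(0)=\cdots=f_1^{(N-1)}(0)=0$. The other is to keep the Taylor corrections of the kernels in the surrogate, which produces non-Hermite terms such as the $f_1(0)f_1'(0)$ term above.
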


\begin{proof}
From
\[
\mathbb E[D_{t_1}\cdots D_{t_n}X]
=
n!f_n(t_1,\ldots,t_n),
\]
continuity implies
\[
f_n(t_1,\ldots,t_n)
=
f_n(0,\ldots,0)+o(1)
=
\frac{\delta_n}{n!}+o(1)
\]
as
\[
(t_1,\ldots,t_n)\to(0,\ldots,0).
\]

Hence
\[
I_n(f_n\mathbf1_{[0,h]^n})
=
\frac{\delta_n}{n!}
I_n(\mathbf1_{[0,h]}^{\otimes n})
+
o(h^{n/2}).
\]

Finally,
\[
I_n(\mathbf1_{[0,h]}^{\otimes n})
=
h^{n/2}
H_n
\left(
\frac{W_h}{\sqrt h}
\right),
\]
which is the classical Wiener--Itô representation of Hermite
polynomials. Summing over \(n\) proves the expansion for \(Y_h\).
The remaining statements follow immediately from the definitions of
expected utility and the certainty equivalent.
\end{proof}

\begin{prop}[Hermite contraction formula]
\label{prop:contractions}

Under the assumptions of Theorem
\ref{thm:hermite},
the coefficients
\(
a_m
\)
of the expected-utility expansion
\[
\mathbb E[u(\mu+Y_h)]
=
u(\mu)+\sum_{m\ge1}a_mh^m
\]
are completely determined by Gaussian moments of Hermite polynomials.

Specifically,
\[
a_m
=
\sum_{\ell=2}^{2m}
\frac{u^{(\ell)}(\mu)}{\ell!}
[h^m]
\mathbb E
\left[
\left(
\sum_{n=1}^{N}
\frac{\delta_n}{n!}
h^{n/2}
H_n(Z)
\right)^\ell
\right].
\]

Equivalently,
\[
a_m
=
\sum_{\Gamma}
C_\Gamma
\,
\mathcal C_\Gamma
(\delta_1,\ldots,\delta_N),
\]
where the sum extends over all complete contraction diagrams generated
by the product formula for Hermite polynomials, and
\(
\mathcal C_\Gamma
\)
denotes the corresponding contraction polynomial.
\end{prop}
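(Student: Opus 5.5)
The plan is to derive the statement directly from Theorem~\ref{thm:hermite} together with the Taylor-expansion argument behind Proposition~\ref{prop:general-mixed-chaos}. By Theorem~\ref{thm:hermite}, $\mathbb E[u(\mu+Y_h)]$ agrees with $\mathbb E[u(\mu+\tilde Y_h)]$ up to $o(h^N)$, where
\[
\tilde Y_h=\sum_{n=1}^{N}\frac{\delta_n}{n!}h^{n/2}H_n(Z).
\]
It therefore suffices to expand the latter. One caveat applies here. The $o(h^{N/2})$ error in Theorem~\ref{thm:hermite} is only small in $L^2$. Before using it I would upgrade it to every $L^p$ via Nelson's hypercontractivity inequality on each fixed chaos, as in the proof of Theorem~\ref{thm:hierarchy}. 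This control is what allows the error to be pushed through the polynomial Taylor terms.

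First, I would Taylor expand $u$ around $\mu$ to order $2M$ with remainder $O(|\tilde Y_h|^{2M+1})$. Since $H_n(Z)$ has moments of all orders and $\tilde Y_h=O(h^{1/2})$ in every $L^p$, the remainder is $O(h^{M+1/2})=o(h^M)$. Second, using $\mathbb E[H_n(Z)]=0$ for $n\ge1$, the linear term vanishes. Each $\mathbb E[\tilde Y_h^\ell]$ is then a finite sum of terms
\[
\prod_i\frac{\delta_{n_i}}{n_i!}\,h^{(n_1+\cdots+n_\ell)/2}\,\mathbb E\Bigl[\prod_i H_{n_i}(Z)\Bigr],
\]
which is an exact polynomial in $h^{1/2}$. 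Third, I would extract $[h^m]$, keeping only index tuples with $n_1+\cdots+n_\ell=2m$. Odd total orders give vanishing Gaussian moments, because $\prod_i H_{n_i}$ then has odd parity, so no half-integer powers survive. Since every $n_i\ge1$, we get $\ell\le 2m$. This yields the first displayed formula for $a_m$.

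For the contraction form, I would evaluate $\mathbb E[\prod_{i=1}^\ell H_{n_i}(Z)]$ with the diagram formula, which iterates the Hermite product formula $H_pH_q=\sum_r r!\binom pr\binom qr H_{p+q-2r}$, the one-dimensional analogue of \eqref{eq:wiener-product-formula}. The expectation equals the number of complete pairings of $\sum n_i$ vertices, arranged in $\ell$ blocks of sizes $n_i$, with no pairing inside a block. Grouping by diagram $\Gamma$ gives the constant $C_\Gamma$, collecting the factor $u^{(\ell)}(\mu)/\ell!$, the $1/\prod n_i!$, and the multiplicity. The associated monomial $\mathcal C_\Gamma=\prod_i\delta_{n_i}$ is the contraction polynomial.

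I expect the main obstacle to be the rigorous transfer of the $o(h^{N/2})$ approximation inside a nonlinear $u$ at the level $o(h^m)$ for $m\le N$. For $m$ beyond the accuracy of the kernel approximation, the continuity assumption alone does not control the higher coefficients. That is, with only continuity at the origin, the $h^m$ coefficients for $m\ge2$ may receive contributions from kernel derivatives, as seen in the $f_1'(0)$ term of Proposition~\ref{prop:mixed-chaos-order-two}. I would therefore read ``$a_m$'' in the statement as the coefficients of the Hermite model expansion, so that the identity is exact for the surrogate $\tilde Y_h$ and holds for $Y_h$ only to the order justified by Theorem~\ref{thm:hermite}. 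I would state that restriction explicitly.
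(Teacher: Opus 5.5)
\textbf{The gap.} Your first step is where things break, and it is more than a matter of rigour. You take from Theorem~\ref{thm:hermite} the reduction $\mathbb E[u(\mu+Y_h)]=\mathbb E[u(\mu+\tilde Y_h)]+o(h^N)$, but this is false in general. Hypercontractivity cannot repair it, because the problem is the rate, not the norm. By the isometry,
$\|I_n((f_n-f_n(0,\ldots,0))\mathbf 1_{[0,h]^n})\|_{L^2}=\sqrt{n!}\,\|(f_n-f_n(0,\ldots,0))\mathbf 1_{[0,h]^n}\|_{L^2([0,T]^n)}=o(h^{n/2})$.
Summing over $n$ therefore gives only $Y_h-\tilde Y_h=o(h^{m_0/2})$ in every $L^p$, where $m_0$ is the lowest chaos present. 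Hence you only get $\mathbb E[u(\mu+Y_h)]-\mathbb E[u(\mu+\tilde Y_h)]=o(h^{m_0})$, so the error terms in Theorem~\ref{thm:hermite} are in general too optimistic once $N>m_0$. The $f_1'(0)$ term you mention is itself a counterexample to the bound you started from. The Vasicek example of Section~\ref{subsec:vasicek-example} makes this explicit. There $\delta_1=aK_0$ and $\delta_2=2bK_0^2$, so the Hermite formula returns $a_2=A_2K_0^4$. But $v_h=K_0^2h+\kappa K_0^2h^2+o(h^2)$ gives the true coefficient $a_2=A_2K_0^4+\frac{\kappa}{2}u''(\mu)a^2K_0^2$. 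Under continuity alone the expansion need not even run in integer powers of $h$; take $f_1(s)=1+\sqrt s$. So the proposition, read literally, fails in general for $m>m_0$, and already for $m=2$ when a first chaos is present. The range $m\le N$ you have in mind is wrong: the transfer to $Y_h$ is valid only for $m\le m_0$. There the only admissible tuple is $\ell=2$, $n_1=n_2=m_0$, and the formula reduces to the leading term of Theorem~\ref{thm:hierarchy}.

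\textbf{What survives.} What you actually prove is correct as a statement about the surrogate $\tilde Y_h$. Parity kills odd total orders, $n_i\ge1$ forces $\ell\le 2m$, and the diagram formula (perfect matchings of $\ell$ blocks of sizes $n_i$ with no intra-block edges) gives the contraction form. This is also the route the paper implicitly takes: it gives no separate proof and simply combines Theorem~\ref{thm:hermite} with the Taylor formula of Proposition~\ref{prop:general-mixed-chaos}, so it inherits the same defect. Under your reinterpretation of $a_m$ the identity holds for every $m$. It describes $\mathbb E[u(\mu+Y_h)]$ to all orders only in special cases, for instance when the kernels are constant near the origin, as in the quadratic Gaussian example where $Y_h=\tilde Y_h$. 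The restriction you should state is $m\le m_0$, not $m\le N$. Finally, finite moments of $H_n(Z)$ do not by themselves control the Lagrange remainder $u^{(2M+1)}(\xi)\tilde Y_h^{2M+1}/(2M+1)!$; you need a growth bound on $u^{(2M+1)}$. For example, for CARA utility and $\tilde Y_h=\frac{\delta_3}{6}h^{3/2}H_3(Z)$ with $\delta_3\neq0$, one has $\mathbb E[u(\mu+\tilde Y_h)]=-\infty$ for every $h>0$.
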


\begin{remark}
The product formula for Hermite polynomials is isomorphic to the product
formula for multiple Wiener integrals. Consequently, every coefficient
of the dynamic risk-premium expansion may be viewed equivalently as a
sum over Wiener-kernel contractions or as a Gaussian expectation of
products of Hermite polynomials. The Hermite representation therefore
provides a finite-dimensional algebraic realization of the underlying
Wiener chaos expansion.
\end{remark}

\section{Applications of the General Framework}

To illustrate the general framework developed in the previous section, we now consider several examples for which the higher-order dynamic risk premia can be computed explicitly. These examples demonstrate how the general coefficient formulas simplify in concrete stochastic models and highlight the role of the Wiener chaos expansion and Malliavin derivatives in determining the successive terms of the small-time certainty-equivalent expansion. They also provide intuition for the economic interpretation of higher-order dynamic risk premia and their dependence on the underlying dynamics.

\subsection{A Quadratic Gaussian Example}
\label{subsec:quadratic-gaussian-example}

To illustrate the preceding results, consider the payoff
\begin{equation*}
X
=
\mu+aW_T+b(W_T^2-T),
\label{eq:quadratic-gaussian-payoff}
\end{equation*}
where \(a,b\in\mathbb R\). Since
\[
W_T
=
I_1(\mathbf1_{[0,T]})
\]
and
\[
W_T^2-T
=
I_2(\mathbf1_{[0,T]}^{\otimes2}),
\]
the payoff has the finite mixed-chaos decomposition
\[
X
=
\mu
+
aI_1(\mathbf1_{[0,T]})
+
bI_2(\mathbf1_{[0,T]}^{\otimes2}).
\]

This specification may be interpreted as a payoff with both a linear
and a quadratic exposure to a Gaussian risk factor. Such exposures
arise naturally in Gaussian factor models, quadratic hedging problems,
polynomial approximations to nonlinear cash flows, and real-option
models with nonlinear revenues or costs.

For \(0<h<T\), the independent-increment property of Brownian motion
gives
\[
\mathbb E[W_T\mid\mathcal F_h]
=
W_h
\]
and
\[
\mathbb E[W_T^2-T\mid\mathcal F_h]
=
W_h^2-h.
\]
Consequently,
\begin{equation}
Y_h
=
\mathbb E[X\mid\mathcal F_h]-\mu
=
aW_h+b(W_h^2-h).
\label{eq:quadratic-gaussian-Yh}
\end{equation}

Writing
\[
Z=\frac{W_h}{\sqrt h}\sim N(0,1),
\]
we obtain the Hermite representation
\begin{equation*}
Y_h
=
a\sqrt h\,H_1(Z)
+
bh\,H_2(Z),
\label{eq:quadratic-hermite}
\end{equation*}
where
\[
H_1(z)=z,
\qquad
H_2(z)=z^2-1.
\]

The expected Malliavin derivatives at the initial diagonal are $\delta_1
=
\mathbb E[D_0X]
=
a
$
and
$
\delta_2
=
\mathbb E[D_0^2X]
=
2b.
$
All higher expected Malliavin derivatives vanish:
$
\mathbb E[D_0^nX]=0$,
for $n\geq3$.

The required moments can be computed either directly from Gaussian
moments or from products of Hermite polynomials. First,
\[
\mathbb E[Y_h^2]
=
a^2h+2b^2h^2.
\]
Second, using $H_1^2=H_2+H_0$, we obtain
\[
\mathbb E[H_1(Z)^2H_2(Z)]
=
\mathbb E[H_2(Z)^2]
=
2,
\]
and hence
\[
\mathbb E[Y_h^3]
=
6a^2bh^2+8b^3h^3.
\]
Finally,
\[
\mathbb E[Y_h^4]
=
3a^4h^2+O(h^3).
\]

It follows that
\[
\mathbb E[u(\mu+Y_h)]
=
u(\mu)+a_1h+a_2h^2+o(h^2),
\]
where
\[
a_1
=
\frac12u''(\mu)a^2
\]
and
\[
a_2
=
u''(\mu)b^2
+
u'''(\mu)a^2b
+
\frac18u^{(4)}(\mu)a^4.
\]

Therefore,
\begin{equation*}
\Pi_0(h)
=
c_1h+c_2h^2+o(h^2),
\label{eq:quadratic-premium-expansion}
\end{equation*}
where
\begin{equation}
c_1
=
-\frac12
\frac{u''(\mu)}{u'(\mu)}
a^2
\label{eq:quadratic-c1}
\end{equation}
and
\begin{align}
c_2
={}&
-\frac{u''(\mu)}{u'(\mu)}b^2
-
\frac{u'''(\mu)}{u'(\mu)}a^2b
\nonumber\\
&-
\frac18
\left[
\frac{u^{(4)}(\mu)}{u'(\mu)}
-
\frac{u''(\mu)^3}{u'(\mu)^3}
\right]a^4.
\label{eq:quadratic-c2}
\end{align}

Equivalently, in terms of the initial expected Malliavin derivatives,
\[
a=\mathbb E[D_0X],
\qquad
b=\frac12\mathbb E[D_0^2X],
\]
and hence
\begin{align*}
\Pi_0(h)
={}
-\frac12
\frac{u''(\mu)}{u'(\mu)}
\bigl(\mathbb E[D_0X]\bigr)^2h
 -
\Bigg\{
\frac14
\frac{u''(\mu)}{u'(\mu)}
\bigl(\mathbb E[D_0^2X]\bigr)^2 \hspace{5.5cm}
\nonumber\\
\qquad
+
\frac12
\frac{u'''(\mu)}{u'(\mu)}
\bigl(\mathbb E[D_0X]\bigr)^2
\mathbb E[D_0^2X]
 \quad
+
\frac18
\left[
\frac{u^{(4)}(\mu)}{u'(\mu)}
-
\frac{u''(\mu)^3}{u'(\mu)^3}
\right]
\bigl(\mathbb E[D_0X]\bigr)^4
\Bigg\}h^2
+
o(h^2).
\label{eq:quadratic-malliavin-premium}
\end{align*}

The example displays the different sources of the second-order dynamic
risk premium. The term involving $\bigl(\mathbb E[D_0^2X]\bigr)^2$ is the pure second-chaos contribution. The term involving
$
\bigl(\mathbb E[D_0X]\bigr)^2
\mathbb E[D_0^2X]
$
is generated by the contraction between two first-chaos factors and
one second-chaos factor. It introduces \(u'''(\mu)\), and hence the
preference characteristic associated with prudence. The final term is
generated by the fourth moment of the first-chaos component together
with the nonlinear inversion from expected utility to the certainty
equivalent.

\paragraph{Exponential utility.}

For exponential utility,
\[
u(x)=-e^{-\gamma x},
\qquad \gamma>0,
\]
the risk premium can be calculated exactly. The defining certainty-
equivalent relation gives
\[
\Pi_0(h)
=
\frac1\gamma
\log\mathbb E[e^{-\gamma Y_h}].
\]
Using \eqref{eq:quadratic-gaussian-Yh} and the Gaussian quadratic-form
identity,
\[
\mathbb E
\left[
e^{-\gamma\{aW_h+b(W_h^2-h)\}}
\right]
=
e^{\gamma bh}
(1+2\gamma bh)^{-1/2}
\exp
\left(
\frac{\gamma^2a^2h}
     {2(1+2\gamma bh)}
\right),
\]
provided
\[
1+2\gamma bh>0.
\]
Therefore,
\begin{equation}
\Pi_0(h)
=
bh
-
\frac1{2\gamma}\log(1+2\gamma bh)
+
\frac{\gamma a^2h}
     {2(1+2\gamma bh)}.
\label{eq:quadratic-exponential-exact}
\end{equation}

Expanding \eqref{eq:quadratic-exponential-exact} around \(h=0\) gives
\begin{equation*}
\Pi_0(h)
=
\frac{\gamma a^2}{2}h
+
\left(
\gamma b^2-\gamma^2a^2b
\right)h^2
+
o(h^2).
\label{eq:quadratic-exponential-expansion}
\end{equation*}
This agrees with \eqref{eq:quadratic-c1}--\eqref{eq:quadratic-c2},
since exponential utility satisfies
\[
\frac{u''}{u'}=-\gamma,
\qquad
\frac{u'''}{u'}=\gamma^2,
\qquad
\frac{u^{(4)}}{u'}=-\gamma^3.
\]
In particular,
\[
\frac{u^{(4)}}{u'}
-
\left(\frac{u''}{u'}\right)^3
=
0,
\]
so that the pure fourth-moment and certainty-equivalent inversion terms
cancel under constant absolute risk aversion.

\subsection{A Quadratic Payoff in the Vasicek Model}
\label{subsec:vasicek-example}

Consider the Vasicek short-rate model
\begin{equation*}
dr_t
=
\kappa(\theta-r_t)\,dt+\sigma\,dW_t,
\qquad
\kappa,\sigma>0.
\label{eq:vasicek-dynamics}
\end{equation*}
Its solution at time \(T\) is
\begin{equation*}
r_T
=
m_T+G_T,
\label{eq:vasicek-solution}
\end{equation*}
where
\[
m_T
=
\theta+(r_0-\theta)e^{-\kappa T}
\]
and
\begin{equation*}
G_T
=
\sigma
\int_0^T
e^{-\kappa(T-s)}
\,dW_s.
\label{eq:vasicek-gaussian-factor}
\end{equation*}
\(G_T\) is Gaussian with variance
\begin{equation*}
v_T
=
\mathbb E[G_T^2]
=
\sigma^2
\int_0^T e^{-2\kappa(T-s)}\,ds
=
\frac{\sigma^2}{2\kappa}
\left(
1-e^{-2\kappa T}
\right).
\label{eq:vasicek-total-variance}
\end{equation*}

We consider the payoff
\begin{equation*}
X
=
\mu+aG_T+b(G_T^2-v_T),
\label{eq:vasicek-payoff}
\end{equation*}
where \(a,b\in\mathbb R\). The linear term represents first-order
exposure to the terminal short-rate innovation, while the quadratic
term captures convexity with respect to the Gaussian rate factor. Such
a payoff may also be interpreted as a second-order approximation to the
profit and loss of an interest-rate position.

Let
\[
K_T(s)
=
\sigma e^{-\kappa(T-s)}.
\]
Then
\[
G_T
=
I_1(K_T)
\]
and
\[
G_T^2-v_T
=
I_2(K_T^{\otimes2}).
\]
Consequently,
\begin{equation*}
X
=
\mu
+
aI_1(K_T)
+
bI_2(K_T^{\otimes2}),
\label{eq:vasicek-chaos-decomposition}
\end{equation*}
which is a finite mixed first--second Wiener chaos expansion.

For \(0<h<T\), define
\begin{equation*}
G_{T,h}
=
\sigma
\int_0^h
e^{-\kappa(T-s)}
\,dW_s
=
I_1(K_T\mathbf1_{[0,h]}).
\label{eq:vasicek-revealed-factor}
\end{equation*}
Its variance is
\begin{align*}
v_h
&=
\mathbb E[G_{T,h}^2]
 =
\sigma^2
\int_0^h
e^{-2\kappa(T-s)}
\,ds
 =
\frac{\sigma^2e^{-2\kappa T}}{2\kappa}
\left(
e^{2\kappa h}-1
\right).
\label{eq:vasicek-revealed-variance}
\end{align*}

Since the Brownian increments over \([0,h]\) and \((h,T]\) are
independent,
\[
\mathbb E[G_T\mid\mathcal F_h]
=
G_{T,h}.
\]
Moreover,
\[
\mathbb E[G_T^2-v_T\mid\mathcal F_h]
=
G_{T,h}^2-v_h.
\]
It follows that the conditional payoff innovation is
\begin{equation*}
Y_h
=
\mathbb E[X\mid\mathcal F_h]-\mu
=
aG_{T,h}
+
b(G_{T,h}^2-v_h).
\label{eq:vasicek-conditional-innovation}
\end{equation*}

Writing
\[
Z_h
=
\frac{G_{T,h}}{\sqrt{v_h}}
\sim N(0,1),
\]
we obtain the Hermite representation
\begin{equation*}
Y_h
=
a\sqrt{v_h}\,H_1(Z_h)
+
bv_h\,H_2(Z_h),
\label{eq:vasicek-hermite-representation}
\end{equation*}
where
\[
H_1(z)=z,
\qquad
H_2(z)=z^2-1.
\]
Thus the Vasicek example has exactly the same Hermite structure as the
Brownian quadratic example, but with calendar time \(h\) replaced by the
variance \(v_h\) of the rate information revealed by time \(h\).

The relevant moments are
\begin{equation*}
\mathbb E[Y_h^2]
=
a^2v_h+2b^2v_h^2,
\label{eq:vasicek-second-moment}
\end{equation*}
\begin{equation*}
\mathbb E[Y_h^3]
=
6a^2bv_h^2+8b^3v_h^3,
\label{eq:vasicek-third-moment}
\end{equation*}
and
\begin{equation*}
\mathbb E[Y_h^4]
=
3a^4v_h^2+O(v_h^3).
\label{eq:vasicek-fourth-moment}
\end{equation*}
The mixed third moment is generated by the Hermite contraction
\[
\mathbb E[H_1(Z_h)^2H_2(Z_h)]
=
2.
\]

Expanding expected utility gives
\begin{equation*}
\mathbb E[u(\mu+Y_h)]
=
u(\mu)
+
A_1v_h
+
A_2v_h^2
+
o(v_h^2),
\label{eq:vasicek-utility-variance-expansion}
\end{equation*}
where
\begin{equation*}
A_1
=
\frac12u''(\mu)a^2
\label{eq:vasicek-A1}
\end{equation*}
and
\begin{equation*}
A_2
=
u''(\mu)b^2
+
u'''(\mu)a^2b
+
\frac18u^{(4)}(\mu)a^4.
\label{eq:vasicek-A2}
\end{equation*}

It follows that the dynamic risk premium admits the variance-time
expansion
\begin{equation}
\Pi_0(h)
=
C_1v_h+C_2v_h^2+o(v_h^2),
\label{eq:vasicek-premium-variance-time}
\end{equation}
where
\begin{equation*}
C_1
=
-\frac12
\frac{u''(\mu)}{u'(\mu)}
a^2
\label{eq:vasicek-C1}
\end{equation*}
and
\begin{align*}
C_2
={}&
-\frac{u''(\mu)}{u'(\mu)}b^2
-
\frac{u'''(\mu)}{u'(\mu)}a^2b
\nonumber\\
&-
\frac18
\left[
\frac{u^{(4)}(\mu)}{u'(\mu)}
-
\frac{u''(\mu)^3}{u'(\mu)^3}
\right]a^4.
\label{eq:vasicek-C2}
\end{align*}

To express this result directly in calendar time, observe that
\begin{equation}
v_h
=
\sigma^2e^{-2\kappa T}h
+
\kappa\sigma^2e^{-2\kappa T}h^2
+
o(h^2).
\label{eq:vasicek-small-h-variance}
\end{equation}
Define
\[
K_0
=
K_T(0)
=
\sigma e^{-\kappa T}.
\]
Then
\[
v_h
=
K_0^2h+\kappa K_0^2h^2+o(h^2).
\]
Substitution into
\eqref{eq:vasicek-premium-variance-time} gives
\begin{equation*}
\Pi_0(h)
=
c_1h+c_2h^2+o(h^2),
\label{eq:vasicek-calendar-expansion}
\end{equation*}
where
\begin{equation*}
c_1
=
-\frac12
\frac{u''(\mu)}{u'(\mu)}
a^2K_0^2
\label{eq:vasicek-calendar-c1}
\end{equation*}
and
\begin{align}
c_2
={}&
-\frac{\kappa}{2}
\frac{u''(\mu)}{u'(\mu)}
a^2K_0^2
 -
\frac{u''(\mu)}{u'(\mu)}
b^2K_0^4
-
\frac{u'''(\mu)}{u'(\mu)}
a^2bK_0^4
 -
\frac18
\left[
\frac{u^{(4)}(\mu)}{u'(\mu)}
-
\frac{u''(\mu)^3}{u'(\mu)^3}
\right]
a^4K_0^4.
\label{eq:vasicek-calendar-c2}
\end{align}

The first term in \eqref{eq:vasicek-calendar-c2} is generated by the
local variation of the Vasicek kernel, $K_T'(0)=\kappa K_0$. The remaining terms are the pure second-chaos, mixed prudence, and
fourth-moment contributions already identified in the general
first--second chaos expansion.

The Malliavin derivatives are
\begin{equation*}
D_sX
=
aK_T(s)+2bG_TK_T(s)
\label{eq:vasicek-first-malliavin}
\end{equation*}
and
\begin{equation*}
D_{s,t}^2X
=
2bK_T(s)K_T(t).
\label{eq:vasicek-second-malliavin}
\end{equation*}
Therefore,
\[
\mathbb E[D_0X]
=
aK_0
\]
and
\[
\mathbb E[D_0^2X]
=
2bK_0^2.
\]
All higher Malliavin derivatives vanish. Hence
\begin{align*}
\Pi_0(h)
={}&
-\frac12
\frac{u''(\mu)}{u'(\mu)}
\bigl(\mathbb E[D_0X]\bigr)^2h
 -
\Bigg\{
\frac{\kappa}{2}
\frac{u''(\mu)}{u'(\mu)}
\bigl(\mathbb E[D_0X]\bigr)^2
 \quad
+
\frac14
\frac{u''(\mu)}{u'(\mu)}
\bigl(\mathbb E[D_0^2X]\bigr)^2
\nonumber\\
& 
+
\frac12
\frac{u'''(\mu)}{u'(\mu)}
\bigl(\mathbb E[D_0X]\bigr)^2
\mathbb E[D_0^2X]
 \quad
+
\frac18
\left[
\frac{u^{(4)}(\mu)}{u'(\mu)}
-
\frac{u''(\mu)^3}{u'(\mu)^3}
\right]
\bigl(\mathbb E[D_0X]\bigr)^4
\Bigg\}h^2
+
o(h^2).
\label{eq:vasicek-malliavin-expansion}
\end{align*}

\paragraph{Exponential utility.}

For exponential utility,
\[
u(x)=-e^{-\gamma x},
\qquad \gamma>0,
\]
the dynamic risk premium is available in closed form:
\begin{equation*}
\Pi_0(h)
=
\frac1\gamma
\log\mathbb E[e^{-\gamma Y_h}].
\label{eq:vasicek-exponential-definition}
\end{equation*}
Since \(G_{T,h}\sim N(0,v_h)\),
\begin{align*}
\mathbb E[e^{-\gamma Y_h}]
={}&
e^{\gamma bv_h}
(1+2\gamma bv_h)^{-1/2}
\nonumber\\
&\times
\exp
\left(
\frac{\gamma^2a^2v_h}
     {2(1+2\gamma bv_h)}
\right),
\label{eq:vasicek-exponential-mgf}
\end{align*}
provided
\[
1+2\gamma bv_h>0.
\]
Consequently,
\begin{equation*}
\Pi_0(h)
=
bv_h
-
\frac1{2\gamma}
\log(1+2\gamma bv_h)
+
\frac{\gamma a^2v_h}
     {2(1+2\gamma bv_h)}.
\label{eq:vasicek-exponential-exact}
\end{equation*}

Its small-\(v_h\) expansion is
\begin{equation*}
\Pi_0(h)
=
\frac{\gamma a^2}{2}v_h
+
\left(
\gamma b^2-\gamma^2a^2b
\right)v_h^2
+
o(v_h^2).
\label{eq:vasicek-exponential-variance-expansion}
\end{equation*}
Using \eqref{eq:vasicek-small-h-variance}, this becomes
\begin{align*}
\Pi_0(h)
={}&
\frac{\gamma a^2K_0^2}{2}h
 +
\left[
\frac{\gamma\kappa a^2K_0^2}{2}
+
\left(
\gamma b^2-\gamma^2a^2b
\right)K_0^4
\right]h^2
+
o(h^2).
\label{eq:vasicek-exponential-calendar-expansion}
\end{align*}

The exact formula provides an independent verification of the general
mixed-chaos expansion. It also illustrates how mean reversion affects
the dynamic risk premium through the information variance \(v_h\).
For fixed \(T\), the initial exposure of the terminal rate to new
Brownian information is $K_0=\sigma e^{-\kappa T}$. Consequently, stronger mean reversion reduces the leading risk premium
by attenuating the effect of shocks occurring near the initial time on
the terminal short rate.

\section{Asymptotic Validity of the Arrow--Pratt Approximation for Wiener Functionals}
\label{sec:Asymptotic validity}

The examples developed in the previous section provide a useful
contrast with the counterexample presented in Proposition~1.
There we constructed a sequence of small risks for which the
Arrow--Pratt approximation fails to be asymptotically correct, even
though the risks converge to zero in every $L^p$.
The Wiener-chaos framework leads to a fundamentally different
conclusion.

The key distinction lies in the mechanism by which the uncertainty
vanishes. In Proposition~1, the variance tends to zero because the
probability of a finite loss converges to zero, while the size of the
loss itself remains unchanged. Consequently, the limiting risk premium
depends on the finite utility difference
\[
u(w)-u(w-a),
\]
rather than solely on the local curvature of the utility function.

By contrast, the Wiener-chaos framework describes local perturbations
generated by Brownian motion. The conditional payoff increment
\[
Y_h
=
E[X|\mathcal F_h]-E[X]
\]
converges locally to zero, and its moments satisfy the
hypercontractive estimates associated with finite Wiener chaoses.
This additional regularity restores the asymptotic validity of the
Arrow--Pratt approximation.
 
To illustrate this phenomenon, consider first the quadratic Gaussian
example
\[
Y_h
=
aW_h+b(W_h^2-h),
\]
together with exponential utility
\[
u(x)
=
-e^{-\gamma x},
\qquad
\gamma>0.
\]

Section~\ref{subsec:quadratic-gaussian-example} showed that the dynamic risk premium is given exactly by
\[
\Pi_0(h)
=
bh
-
\frac{1}{2\gamma}
\log(1+2\gamma bh)
+
\frac{\gamma a^2h}
     {2(1+2\gamma bh)}.
\]

Expanding around $h=0$ gives
\[
\Pi_0(h)
=
\frac{\gamma a^2}{2}h
+
\left(
\gamma b^2
-
\gamma^2a^2b
\right)h^2
+
O(h^3).
\]

On the other hand,
\[
\operatorname{Var}(Y_h)
=
a^2h+2b^2h^2,
\]
so that the classical Arrow--Pratt approximation becomes
\[
\Pi_{AP}(h)
=
\frac{\gamma}{2}
\operatorname{Var}(Y_h)
=
\frac{\gamma a^2}{2}h
+
\gamma b^2h^2.
\]

Consequently,
\[
\Pi_0(h)-\Pi_{AP}(h)
=
-\gamma^2a^2bh^2
+
O(h^3).
\]

The leading Arrow--Pratt term is therefore recovered exactly, while the
first correction is generated by the interaction between the first and
second Wiener chaoses.

Exactly the same phenomenon occurs for the Vasicek model. For the payoff
\[
X
=
\mu+aG_T+b(G_T^2-v_T),
\]
the conditional payoff increment is
\[
Y_h
=
aG_{T,h}
+
b(G_{T,h}^2-v_h),
\]
where
\[
v_h
=
\operatorname{Var}(G_{T,h}).
\]

The exact exponential-utility expansion obtained in Section~\ref{subsec:vasicek-example} is
\[
\Pi_0(h)
=
\frac{\gamma a^2}{2}v_h
+
\left(
\gamma b^2
-
\gamma^2a^2b
\right)
v_h^2
+
O(v_h^3).
\]

Since
\[
\operatorname{Var}(Y_h)
=
a^2v_h
+
2b^2v_h^2,
\]
the corresponding Arrow--Pratt approximation is
\[
\Pi_{AP}(h)
=
\frac{\gamma}{2}
\operatorname{Var}(Y_h)
=
\frac{\gamma a^2}{2}v_h
+
\gamma b^2v_h^2.
\]

Hence
\[
\Pi_0(h)-\Pi_{AP}(h)
=
-\gamma^2a^2bv_h^2
+
O(v_h^3).
\]

The only effect of the Vasicek dynamics is that calendar time is
replaced by the revealed-information variance
\[
v_h
=
\frac{\sigma^2e^{-2\kappa T}}
{2\kappa}
\left(
e^{2\kappa h}-1
\right),
\]
while the structure of the higher-order correction remains unchanged.

The previous examples suggest that the behaviour observed above is not
accidental but is a structural property of finite Wiener-chaos
expansions.

\begin{prop}[Asymptotic validity of the Arrow--Pratt approximation]
\label{prop:arrow-pratt-asymptotic}
Suppose that
\[
X
=
\mu
+
\sum_{n=1}^{N}
I_n(f_n),
\]
where the kernels are continuous in a neighbourhood of the initial
diagonal, and let
\[
m
=
\min
\{
n:
f_n\neq0
\}
\]
denote the lowest non-vanishing Wiener chaos.

Then the dynamic risk premium satisfies
\[
\Pi_0(h)
=
\Pi_{AP}(h)
+
O(h^{m+1}),
\]
where
\[
\Pi_{AP}(h)
=
-\frac12
\frac{u''(\mu)}
     {u'(\mu)}
\operatorname{Var}
(E[X|\mathcal F_h]).
\]



\end{prop}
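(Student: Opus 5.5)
Our plan is to compare the two Taylor expansions used in the proof of Theorem~\ref{thm:hierarchy} and Proposition~\ref{prop:general-mixed-chaos}, now keeping track of the remainder at order $h^{m+1}$ rather than $o(h^m)$.

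Throughout we assume that $u\in C^3$ near $\mu$ with bounded derivatives along the relevant range, as is implicit in Proposition~\ref{prop:general-mixed-chaos}. We also strengthen the kernel hypothesis slightly: the first-chaos kernel should be Lipschitz (or $C^1$) at the origin, as in Proposition~\ref{prop:mixed-chaos-order-two}. Mere continuity only yields $o(h^m)$ rather than a rate, and only the finiteness of the moment bounds below is otherwise needed.

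\textbf{Step 1: moment bounds for $Y_h$.} Write $Y_h=\sum_{n=m}^N I_n(f_n\mathbf 1_{[0,h]^n})$. By the isometry and continuity of the kernels, $\|I_n(f_n\mathbf 1_{[0,h]^n})\|_{L^2}=O(h^{n/2})$. Hence $\mathbb E[Y_h^2]=\operatorname{Var}(E[X|\mathcal F_h])=O(h^m)$. Nelson's hypercontractivity on the finite sum of chaoses of order $\le N$ then gives $\mathbb E|Y_h|^p=O(h^{pm/2})$ for every $p\ge2$. In particular $\mathbb E|Y_h|^3=O(h^{3m/2})$.

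\textbf{Step 2: expand the expected utility.} Apply Taylor's theorem to third order with integral remainder, using $\mathbb E[Y_h]=0$:
\[
\mathbb E[u(\mu+Y_h)]=u(\mu)+\tfrac12u''(\mu)\operatorname{Var}(Y_h)+R_h,\qquad |R_h|\le C\,\mathbb E|Y_h|^3=O(h^{3m/2}).
\]
For $m\ge2$ we have $3m/2\ge m+1$, so $R_h=O(h^{m+1})$.

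For $m=1$ this bound only gives $O(h^{3/2})$, and we must do better. Decompose $\mathbb E[u'''(\xi)Y_h^3]=u'''(\mu)\mathbb E[Y_h^3]+O(\mathbb E|Y_h|^4)$. It then suffices to show $\mathbb E[Y_h^3]=O(h^2)$. By the product formula, the first-chaos cube $\mathbb E[A_h^3]$ vanishes, and every other term contains at least one higher chaos, so it is $O(h^{1/2+1/2+1})=O(h^2)$. This is exactly the odd-moment cancellation already seen in Section~5.

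\textbf{Step 3: invert the certainty equivalent.} Strict monotonicity and continuity of $u$, together with Step 2, give $\Pi_0(h)\to0$. Using $u(\mu-\Pi)=u(\mu)-u'(\mu)\Pi+O(\Pi^2)$, we first obtain $\Pi_0(h)=O(h^m)$, and therefore $\Pi_0(h)^2=O(h^{2m})=O(h^{m+1})$. Equating with Step 2 and dividing by $u'(\mu)>0$ yields
\[
\Pi_0(h)=-\frac12\frac{u''(\mu)}{u'(\mu)}\operatorname{Var}(E[X|\mathcal F_h])+O(h^{m+1}),
\]
which is the claim.

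The main obstacle is the $m=1$ case in Step 2, where the crude hypercontractive bound falls half an order short. This is handled by the vanishing of odd Gaussian moments for the first chaos and an explicit count of contraction orders. For $m\ge2$ the argument is a direct sharpening of the proof of Theorem~\ref{thm:hierarchy}.
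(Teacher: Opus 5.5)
Your proof follows the paper's argument: a second-order Taylor expansion of both sides, hypercontractive moment bounds, and a comparison after showing $\Pi_0(h)=O(h^m)$. It is, however, more careful than the paper at $m=1$. The paper simply asserts that all moments of order three and higher contribute $O(h^{m+1})$. Hypercontractivity alone only gives $\mathbb E|Y_h|^3=O(h^{3m/2})$, and this is $O(h^{m+1})$ only when $m\ge2$. Your observation closes the $m=1$ case exactly: $\mathbb E[A_h^3]=0$, and every other cubic term has total chaos order at least $4$, so it is $O(h^2)$ by H\"older.

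Two of your hypotheses need adjusting.

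First, the step $\mathbb E[u'''(\xi)Y_h^3]=u'''(\mu)\mathbb E[Y_h^3]+O(\mathbb E|Y_h|^4)$ requires $u'''$ to be Lipschitz, for example $u\in C^4$ with $u^{(4)}$ suitably bounded. Your standing assumption $u\in C^3$ is not enough. This condition is genuinely needed: for $X=\mu+W_T$ and a third derivative that is only H\"older-$\alpha$ with $\alpha<1$, the Taylor remainder can be of exact order $h^{(3+\alpha)/2}$, which is not $O(h^2)$.

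Second, the Lipschitz assumption on $f_1$ is superfluous. Since $\Pi_{AP}(h)$ is defined with the exact variance $\operatorname{Var}(E[X|\mathcal F_h])$, you never need to expand $\operatorname{Var}(Y_h)$ itself in powers of $h$. Continuity of the kernels near the origin already gives local boundedness, and hence $\|I_n(f_n\mathbf 1_{[0,h]^n})\|_{L^2}^2\le n!\,\sup_{[0,h]^n}|f_n|^2\,h^n$. That bound is all that Steps 1--3 use.
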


\begin{proof}

By Theorem~4,
the conditional payoff increment satisfies
\[
Y_h
=
O_{L^2}(h^{m/2}).
\]

Hypercontractivity implies
\[
\|Y_h\|_p
=
O(h^{m/2})
\]
for every finite $p$.

Consequently,
\[
E[u(\mu+Y_h)]
=
u(\mu)
+
\frac12
u''(\mu)
E[Y_h^2]
+
O(h^{m+1}),
\]
since all moments of order three and higher contribute only at order
$h^{m+1}$ or smaller.

Similarly,
\[
u(\mu-\Pi_0(h))
=
u(\mu)
-
u'(\mu)\Pi_0(h)
+
O(h^{m+1}),
\]
because
\[
\Pi_0(h)
=
O(h^m).
\]

Comparison of the two expansions yields
\[
\Pi_0(h)
=
-
\frac12
\frac{u''(\mu)}
     {u'(\mu)}
E[Y_h^2]
+
O(h^{m+1}),
\]
which is precisely the Arrow--Pratt approximation together with the
stated remainder estimate.
\end{proof}

Proposition~\ref{prop:arrow-pratt-asymptotic} provides a precise mathematical characterization of the range
of validity of the classical Arrow--Pratt approximation.
Without further assumptions, Proposition~1 shows that the approximation
may fail even for sequences of risks converging to zero in every
$L^p$.
For regular Wiener functionals, however, the additional structure
provided by Malliavin differentiability and the Wiener-chaos
decomposition restores the asymptotic validity of the Arrow--Pratt
formula.

The higher-order terms are no longer arbitrary. They are completely
determined by the interaction between different Wiener chaos orders,
their associated contraction operators, and the higher derivatives of
the utility function. Consequently, the Wiener-chaos decomposition not
only yields a hierarchy of time scales at which uncertainty enters the
dynamic certainty equivalent, but also provides a complete hierarchy of
dynamic Arrow--Pratt coefficients extending the classical theory to
arbitrary order.

\section{Conclusion}

This paper has developed a new dynamic framework for the analysis of certainty
equivalents and risk premia by combining expected utility theory with
Malliavin calculus and Wiener chaos analysis. Our starting point was a
re-examination of the classical Arrow--Pratt approximation. We showed that,
contrary to a common interpretation, the Arrow--Pratt formula is not
asymptotically valid for arbitrary sequences of vanishing risks. In
particular, small variance alone does not guarantee asymptotic accuracy,
highlighting the need for additional structural assumptions on the underlying
uncertainty.

Motivated by this observation, we reformulated the certainty-equivalent
problem dynamically by considering the progressive revelation of uncertainty
through a Brownian filtration. This leads naturally to an infinitesimal
dynamic risk premium, which recovers the classical Arrow--Pratt coefficient as
its leading-order term and admits an explicit representation in terms of the
Clark--Ocone integrand and the Malliavin derivative of the terminal payoff.

Building on this dynamic viewpoint, we derived a hierarchy of higher-order
dynamic risk premia using the Wiener chaos decomposition. For pure Wiener
chaos components, the order of the chaos determines the first order at which
the corresponding uncertainty contributes to the certainty equivalent, giving
rise to a natural hierarchy indexed by the chaos order. The associated
coefficients admit explicit expressions in terms of Malliavin derivatives,
providing a direct stochastic-analytic interpretation of higher-order dynamic
risk premia.

For mixed Wiener chaos expansions, a richer structure emerges. Interactions
between different chaos orders generate higher-order preference measures,
including prudence and temperance, while Bell polynomial representations
describe the nonlinear transformation from expected utility to certainty
equivalents. This establishes, to our knowledge, the first systematic link
between the classical hierarchy of higher-order risk preferences in economics
and the orthogonal decomposition of uncertainty provided by Wiener chaos
analysis.

The examples considered in this paper further demonstrate that a broad class
of regular Wiener functionals satisfies the conditions under which the
classical Arrow--Pratt approximation is recovered asymptotically as the
leading-order term, while the higher-order corrections are entirely determined
by the local Wiener chaos structure. In this sense, the present framework not
only explains when the classical approximation is valid, but also quantifies
its higher-order corrections in a mathematically transparent way.

Several directions for future research appear promising. From a theoretical perspective, it would be natural to extend the analysis to
more general semimartingale settings, including jump processes and Lévy-driven
models, where the Wiener chaos expansion is replaced by the corresponding
Poisson or Lévy chaos decomposition. It
would also be of interest to investigate analogous hierarchies under
alternative preference models, such as recursive or ambiguity-sensitive
utilities. On the applied side, the dynamic risk-premium expansions developed
here may provide new approximation techniques for utility-based pricing,
portfolio optimisation, indifference valuation and stochastic control in
high-dimensional continuous-time models.

More broadly, the results suggest that Malliavin calculus and Wiener chaos
analysis provide considerably more than computational tools for stochastic
analysis. They offer a natural analytical language for studying dynamic
certainty equivalents and higher-order risk preferences, thereby establishing
new connections between expected utility theory, stochastic analysis and
mathematical finance.

\end{document}